\newcommand{\spara}[1]{\smallskip\noindent\textbf{#1}}
\newcommand{\para}[1]{\noindent\textbf{#1}}
\newtheorem{definition}{Definition}
\newtheorem{lemma}[definition]{Lemma}
\newtheorem{proposition}[definition]{Proposition}
\newtheorem{problem}{Problem}[section]
\newtheorem{subproblem}[definition]{Subproblem}
\newcommand{\set}[1]{\left\{#1\right\}}
\newcommand{\pr}[1]{\left(#1\right)}
\newcommand{\fpr}[1]{\mathopen{}\left(#1\right)}
\newcommand{\fspr}[1]{\mathopen{}\left[#1\right]}
\newcommand{\brak}[1]{\left<#1\right>}
\newcommand{\abs}[1]{{\left|#1\right|}}
\newcommand{\real}{\mathbb{R}}
\newcommand{\np}{\textbf{NP}}
\newcommand{\countingp}{\textbf{\#P}}
\newcommand{\funcdef}[3]{{#1}:{#2} \to {#3}}
\newcommand{\define}{\leftarrow}
\newcommand{\NP}{\ensuremath{\mathbf{NP}}}
\newcommand{\xor}{\mathbin{\triangle}}
\DeclareRobustCommand{\dispfunc}[2]{%
	\ensuremath{%
		\ifthenelse{\equal{#2}{}}%
		{\mathit{#1}}%
		{\mathit{#1}\fpr{#2}}}}
\newcommand{\pone}{p_1}
\newcommand{\ptwo}{p_2}
\newcommand{\pii}{p_i}
\newcommand{\reach}{r}
\newcommand{\reachone}{{\reach}_1}
\newcommand{\reachtwo}{{\reach}_2}
\newcommand{\diffcover}{\Omega}
\newcommand{\diffnew}{\Psi}
\newcommand{\diff}{\Phi}
\newcommand{\diffind}{{\diff}_{\mathit{H}}}
\newcommand{\difftc}{{\diff}_{\mathit{C}}}
\newcommand{\sideone}{q_1}
\newcommand{\sidetwo}{q_2}
\newcommand{\sideii}{q_i}
\newcommand{\retweets}{R}
\newcommand{\mean}[2]{\operatorname{E}_{#1}\fspr{#2}}
\newcommand{\meantc}[1]{\mean{\mathit{C}}{#1}}
\newcommand{\balanceprob}{{\sc Balance}\xspace}
\newcommand{\balanceprobh}{{\sc Balance-H}\xspace}
\newcommand{\balanceprobc}{{\sc Balance-C}\xspace}
\newcommand{\greedyalg}{\textsf{\small Greedy}\xspace}
\newcommand{\coveralg}{\textsf{\small Cover}\xspace}
\newcommand{\commonalg}{\textsf{\small Common}\xspace}
\newcommand{\hedgealg}{\textsf{\small Hedge}\xspace}
\newcommand{\randomalg}{\textsf{\small Random}\xspace}
\newcommand{\highdegreealg}{\textsf{\small High\-Degree}\xspace}
\newcommand{\interalg}{\textsf{\small Intersection}\xspace}
\newcommand{\borodinalg}{\textsf{\small BBLO}\xspace}
\newcommand{\unionalg}{\textsf{\small Union}\xspace}
\newcommand{\uselections}{{\tt US-elections}\xspace}
\newcommand{\uselectionsalt}{{\tt US}\xspace}
\newcommand{\brexit}{{\tt Brexit}\xspace}
\newcommand{\iphone}{{\tt iPhone}\xspace}
\newcommand{\obamacare}{{\tt ObamaCare}\xspace}
\newcommand{\abortion}{{\tt Abortion}\xspace}
\newcommand{\fracking}{{\tt Fracking}\xspace}
\newcommand{\squishlist}{
	\begin{list}{$\bullet$}
		{  \setlength{\itemsep}{0pt}
			\setlength{\parsep}{3pt}
			\setlength{\topsep}{3pt}
			\setlength{\partopsep}{0pt}
			\setlength{\leftmargin}{1em}
			\setlength{\labelwidth}{1.5em}
			\setlength{\labelsep}{0.5em}
	} }
	\newcommand{\squishlisttight}{
		\begin{list}{$\bullet$}
			{ \setlength{\itemsep}{0pt}
				\setlength{\parsep}{0pt}
				\setlength{\topsep}{0pt}
				\setlength{\partopsep}{0pt}
				\setlength{\leftmargin}{2em}
				\setlength{\labelwidth}{1.5em}
				\setlength{\labelsep}{0.5em}
		} }
		\newcommand{\squishdesc}{
			\begin{list}{}
				{  \setlength{\itemsep}{0pt}
					\setlength{\parsep}{3pt}
					\setlength{\topsep}{3pt}
					\setlength{\partopsep}{0pt}
					\setlength{\leftmargin}{1em}
					\setlength{\labelwidth}{1.5em}
					\setlength{\labelsep}{0.5em}
			} }
			\newcommand{\squishend}{
			\end{list}
		}
		\definecolor{yafaxiscolor}{rgb}{0.3, 0.3, 0.3}
		\definecolor{yafcolor1}{rgb}{0.4, 0.165, 0.553}
		\definecolor{yafcolor2}{rgb}{0.949, 0.482, 0.216}
		\definecolor{yafcolor3}{rgb}{0.47, 0.549, 0.306}
		\definecolor{yafcolor4}{rgb}{0.925, 0.165, 0.224}
		\definecolor{yafcolor5}{rgb}{0.141, 0.345, 0.643}
		\definecolor{yafcolor6}{rgb}{0.965, 0.933, 0.267}
		\definecolor{yafcolor7}{rgb}{0.627, 0.118, 0.165}
		\definecolor{yafcolor8}{rgb}{0.878, 0.475, 0.686}
		\tikzstyle{exnode} = [inner sep = 1pt]
		\tikzstyle{labnode} = [sloped, text = black, font = \scriptsize, inner sep = 1pt]
		\tikzstyle{exedge} = [yafcolor5, draw, thick, >=latex, ->]
		\tikzstyle{exedge2} = [yafcolor2, draw, thick, >=latex, ->]
		\tikzstyle{exedge3} = [yafcolor3, draw, thick, >=latex, ->]
		\newlength{\yafaxispad}
		\newlength{\yaftlpad}
		\newlength{\yaflabelpad}
		\newlength{\yafaxiswidth}
		\newlength{\yafticklen}
		\def\pgfplots@drawtickgridlines@INSTALLCLIP@onorientedsurf#1{}
		\newcommand{\yafdrawxaxis}[2]{
			\pgfplotstransformcoordinatex{#1}\let\xmincoord=\pgfmathresult 
			\pgfplotstransformcoordinatex{#2}\let\xmaxcoord=\pgfmathresult 
			\pgfsetlinewidth{\yafaxiswidth} 
			\pgfsetcolor{yafaxiscolor}
			\pgfpathmoveto{\pgfpointadd{\pgfpointadd{\pgfplotspointrelaxisxy{0}{0}}{\pgfqpointxy{\xmincoord}{0}}}{\pgfqpoint{-0.5\yafaxiswidth}{\yafaxispad}}}
			\pgfpathlineto{\pgfpointadd{\pgfpointadd{\pgfplotspointrelaxisxy{0}{0}}{\pgfqpointxy{\xmaxcoord}{0}}}{\pgfqpoint{0.5\yafaxiswidth}{\yafaxispad}}}
			\pgfusepath{stroke}
			
		}
		\newcommand{\yafdrawyaxis}[2]{
			\pgfplotstransformcoordinatey{#1}\let\ymincoord=\pgfmathresult 
			\pgfplotstransformcoordinatey{#2}\let\ymaxcoord=\pgfmathresult 
			\pgfsetlinewidth{\yafaxiswidth} 
			\pgfsetcolor{yafaxiscolor}
			\pgfpathmoveto{\pgfpointadd{\pgfpointadd{\pgfplotspointrelaxisxy{0}{0}}{\pgfqpointxy{0}{\ymincoord}}}{\pgfqpoint{\yafaxispad}{-0.5\yafaxiswidth}}}
			\pgfpathlineto{\pgfpointadd{\pgfpointadd{\pgfplotspointrelaxisxy{0}{0}}{\pgfqpointxy{0}{\ymaxcoord}}}{\pgfqpoint{\yafaxispad}{0.5\yafaxiswidth}}}
			\pgfusepath{stroke}
		}
		\newcommand{\yafdrawaxis}[4]{\yafdrawxaxis{#1}{#2}\yafdrawyaxis{#3}{#4}}
		\pgfplotsset{axis y line=left, axis x line=bottom,
			tick align=outside,
			compat = 1.3,
			tickwidth=\yafticklen,
			clip = false,
			every axis title shift = 0pt,
			x axis line style= {-, line width = 0pt, opacity = 0},
			y axis line style= {-, line width = 0pt, opacity = 0},
			x tick style= {line width = \yafaxiswidth, color=yafaxiscolor, yshift = \yafaxispad},
			y tick style= {line width = \yafaxiswidth, color=yafaxiscolor, xshift = \yafaxispad},
			x tick label style = {font=\scriptsize, yshift = \yaftlpad},
			y tick label style = {font=\scriptsize, xshift = \yaftlpad},
			every axis y label/.style = {at = {(ticklabel cs:0.5)}, rotate=90, anchor=center, font=\scriptsize, yshift = -\yaflabelpad},
			every axis x label/.style = {at = {(ticklabel cs:0.5)}, anchor=center, font=\scriptsize, yshift = \yaflabelpad},
			x tick label style = {font=\scriptsize, yshift = 1pt},
			grid = major,
			major grid style  = {dash pattern = on 1pt off 3 pt},
			every axis plot post/.append style= {line width=\yafaxiswidth} ,
			legend cell align = left,
			legend style = {inner ysep = 1pt, inner xsep = 3pt, cells = {font=\scriptsize}},
			legend image code/.code={%
				\draw[mark repeat=2,mark phase=2,#1] 
				plot coordinates { (0cm,0cm) (0.15cm,0cm) (0.3cm,0cm) };%
			} 
		}
\author{
	Kiran Garimella \\
	Aalto University \& HIIT\\
	Helsinki, Finland\\
	\texttt{kiran.garimella@aalto.fi}
	\And
	Aristides Gionis\\
	Aalto University \& HIIT\\
	Helsinki, Finland\\
	\texttt{aristides.gionis@aalto.fi}
	\And
	Nikos Parotsidis\\
	University of Rome Tor Vergata\\
	Rome, Italy\\
	\texttt{nikos.parotsidis@uniroma2.it}
	\And 
	Nikolaj Tatti \\
	Aalto University \& HIIT\\
	Helsinki, Finland\\
	\texttt{nikolaj.tatti@aalto.fi}
}
\begin{document}
\title{Balancing information exposure in social networks}

\maketitle

\begin{abstract}
Social media has brought a revolution 
on how people 
are consuming news.
Beyond the undoubtedly large number of 
advantages 
brought by social-media platforms, 
a point of criticism
has been the creation of \emph{echo chambers} and \emph{filter bubbles},
caused by \emph{social homophily}
and \emph{algorithmic personalization}.

In this paper
we address the problem of 
\emph{balancing the information exposure}
in a social network.
We assume that two opposing campaigns (or viewpoints) 
are present in the network,
and that network nodes have different preferences 
towards these campaigns.
Our goal is to find two sets of nodes
to employ in the respective campaigns, so that 
the overall information exposure for the two campaigns is {\em balanced}.
We formally define the problem, 
characterize its hardness, 
develop approximation algorithms, 
and present experimental evaluation results.

Our model is inspired by the literature on \emph{influence maximization}, 
but we offer significant novelties. 
First, {\em balance} of information exposure is modeled by a
{\em symmetric difference} function, 
which is neither monotone nor submodular, 
and thus, not amenable to existing approaches. 
Second, while previous papers consider 
a setting with selfish agents
and provide bounds on {\em best response} strategies
(i.e., move of the last player), 
we consider a setting with a centralized agent and
provide bounds for a global objective function.

\end{abstract}

%
%

%
%



\section{Introduction}
\label{sec:intro}
Social-media platforms have revolutionized many aspects of human culture, 
among others, the way people are exposed to information.
A recent survey estimates that 62\% of adults in the US
get their news on social media~\cite{pew}.
Despite providing many desirable features, such as, 
searching, personalization, and recommendations,
one point of criticism is that 
social media amplify the phenomenon of \emph{echo chambers} and \emph{filter bubbles}:
users get less exposure to conflicting viewpoints and are isolated  
in their own informational bubble.
This phenomenon is contributed to social homophily 
and algorithmic personalization, 
and is more acute for controversial topics~\cite{akoglu2014quantifying,conover2011political,del2015echo,garimella2016quantifying,garrett2009echo}.

In this paper we address the problem
of reducing the filter-bubble effect 
by balancing information exposure among users. 
We consider social-media discussions around a topic
that are characterized by two or more {\em conflicting viewpoints}.
Let us refer to these viewpoints as {\em campaigns}.
Our approach follows the popular paradigm of influence propagation~\cite{kempe2003maximizing}:
we want to select a small number of seed users for each campaign
so as to maximize the 
number of users who are {\em exposed to both campaigns}.
In contrast to existing work on competitive viral marketing, 
we do not consider the problem of finding an optimal {\em selfish strategy} 
for each campaign separately.  
Instead we consider a {\em certalized agent}
responsible for balancing information exposure for the two campaings. 
Consider the following motivating examples.

\noindent
{\bf Example 1:}
Prominent social-media companies, like Facebook and Twitter,
have been called to act as arbiters so as to prevent ideological isolation and 
polarization in the society. 
The motivation for companies to assume this role
could be for improving their public image or due to government policies.\footnote{For instance, 
Germany is now fining Facebook for the spread of fake news.}
Consider a controversial topic being discussed in social-media platform $X$, 
which has led to polarization and filter bubbles.
Platform $X$ has the ability to algorithmically detect
such bubbles~\cite{garimella2016quantifying}, 
identify the influential users on each side, 
and estimate the influence among users~\cite{du2013scalable,gomez2010inferring}.
As part of a new filter-bubble bursting service, 
platform $X$ would like to disseminate two high-quality and thought-provoking dueling op-eds,
articles, one for each side, that present the arguments of the other side in a fair manner.
Assume that $X$ is interested in following a viral-marketing approach.
Which users should $X$ target, for each of the two articles,
so that people in the network are informed in the most balanced way?

\noindent
{\bf Example 2:}
Government organization $Y$ is initiating a programme to help
assimilate foreigners who have newly arrived in the country. 
Part of the initiative focuses on bringing the communities of 
foreigners and locals closer in social media. 
Organization $Y$ is interested in identifying individuals who can help spreading news
of one community into the other.

From the technical standpoint, 
we consider the following problem setting:
We assume that information is propagated 
in the network according to the {\em independent-cascade model}~\cite{kempe2003maximizing}.
We assume that there are two opposing campaigns,
and for each one there is a set of initial seed nodes, $I_1$ and $I_2$, 
which are not necessarily distinct. 
Furthermore, 
we assume that the users in the network are exposed to information 
about campaign $i$ via diffusion from the set of seed nodes $I_i$.
The diffusion in the network may occur with 
independent or correlated probabilities
for the two campaigns; 
we consider both settings to which we are referring as 
{\em heterogeneous} or {\em correlated}.

The objective is to recruit two additional sets of seed nodes, 
$S_1$ and $S_2$, for the two campaigns, 
with $\abs{S_1} + \abs{S_2} \leq k$, for a given budget $k$, 
so as to maximize the expected number of balanced users,
i.e., the users who are exposed to 
information from both campaigns (or from none!).

We formally define the problem of 
balancing information exposure and we show that it is \NP-hard.
We develop different approximation algorithms
for the different settings we consider, 
as well as heuristic variants of the proposed algorithm. 
We experimentally evaluate our methods, 
on several real-world (and realistic) datasets,
collected from twitter,
for different topics of interest.

Although our approach is inspired by the large body of work on 
information propagation, 
and resembles previous problem formulations 
for competitive viral marketing, 
there are significant differences and novelties. 
In particular:
\squishlist
\item 
This is the first paper to address the 
problem of {\em balancing information exposure}
and {\em breaking filter bubbles},
using the information-propagation methodology.
\item 
The objective function that best suits our problem setting
is related to the {\em size of the symmetric difference}
of users exposed to the two campaigns. 
This is in contrast to previous settings
that consider functions related to the {\em size of the coverage} 
of the campaigns. 
\item 
As a technical consequence of the previous point, 
our objective function is neither {\em monotone} nor {\em submodular} 
making our problem more challenging.
Yet we are able to analyze the problem structure
and provide algorithms with approximation guarantees.
\item 
While most previous papers consider 
selfish agents,
and provide bounds on {\em best-response} strategies
(i.e., move of the last player), 
we consider a centralized setting and
provide bounds for a global objective function.
\squishend
We note that our datasets and implementation is 
publicly available.\footnote{\url{https://users.ics.aalto.fi/kiran/BalanceExposure/}}

\section{Related Work}
\label{sec:relatedwork}
\spara{Detecting and breaking filter bubbles.}
Several studies have observed
that users in online social networks  
prefer to associate with like-minded individuals
and consume agreeable content. 
This phenomenon 
leads to {\em filter bubbles}, 
{\em echo chambers}~\cite{pariser2011filter,nikolov2015measuring}, 
and to online polarization~\cite{adamic2005political,akoglu2014quantifying,beutel2012interacting,conover2011political,garimella2016quantifying,guerra2013measure,morales2015measuring}.
Once these filter bubbles are detected, 
the next step is to try to overcome them. 
One way to achieve this is by making 
recommendations to individuals of opposing viewpoints. 
This idea has been explored, in different ways,
by a number of studies in the literature~\cite{garimella2017reducing,liao2014can,liao2014expert,munson2013encouraging,vydiswaran2015overcoming}. 
However, all previous studies
address the problem of breaking filter bubbles
by the means of {\em content recommendation}.
To the best of our knowledge, 
this is the first paper that considers an
{\em information diffusion} approach.

\spara{Information diffusion.}
Following a large body of work,
we model diffusion using the {\em independent-cascade model}~\cite{kempe2003maximizing}.
%
In the basic 
model a single item
propagates in the network.
An extension 
is when multiple items propagate simultaneously.
All works that study optimization problems
in the case of multiple items,
consider that items {\em compete} for being adopted by users.
In other words, every user adopts at most one of the existing items
and participates in at most one cascade. 

Myers and Leskovec~\cite{myers2012clash} argue that spreading processes 
may either cooperate or compete.
Competing contagions decrease each other's probability of diffusion, 
while cooperating ones help each other in being adopted. 
They propose a model that quantifies how different spreading cascades 
interact with each other. 
Carnes et al.~\cite{carnes2007maximizing} propose two models
for competitive diffusion.
Subsequently, several other 
models have been proposed~\cite{apt2011diffusion, borodin2010threshold,broecheler2010scalable,dubey2006competing,jie2016study,kostka2008word,lu2015competition,valera2015modeling}.

Most of the work on {\em competitive information diffusion}
consider the problem of selecting the best $k$ seeds for one campaign, 
for a given objective, 
in the presence of 
competing campaigns~\cite{bharathi2007competitive,budak2011limiting,nguyen2012containment}. 
Bharathi et al.~\cite{bharathi2007competitive} show that, 
if all campaigns but one have fixed sets of seeds, 
the problem for selecting the seeds for the last player is submodular, 
and thus, obtain an approximation algorithm 
for the strategy of the last player.
Game theoretic aspects of competitive cascades in social networks, 
including the investigation of conditions for the existence of Nash equilibrium, 
have also been studied~\cite{alon2010note,goyal2014competitive,tzoumas2012game}.

The work that is most related to ours, 
in the sense of considering a {\em centralized authority},
is the one by Borodin et al.~\cite{borodin2017strategyproof}. 
They study the problem where multiple campaigns 
wish to maximize their influence by selecting a set of seeds with bounded cardinality. 
They propose a centralized mechanism to allocate sets of seeds 
(possibly overlapping) to the campaigns 
so as to maximize the social welfare, 
defined as the sum of the individual's selfish objective functions. 
One can choose any objective functions as long as it is
submodular and non-decreasing.
Under this assumption they provide strategyproof (truthful) algorithms that offer
guarantees on the social welfare.
Their framework applies for several competitive influence models.
In our case, the number of balanced users is not submodular, and so
we do not have any approximation guarantees.
Nevertheless, we can use this framework as a heuristic baseline, 
which we do in the experimental section.


\section{Problem Definition}
\label{sec:problem}

\spara{Preliminaries:}
We start with a directed graph $G=(V, E,$  $\pone, \ptwo)$
representing a social network. 
We assume that there are two distinct
campaigns that propagate through the network.
Each edge $e = (u,v)\in E$ 
is assigned two probabilities, 
$\pone(e)$ and $\ptwo(e)$,  
representing the probability
that a post from vertex $u$ will propagate 
(e.g., it will be reposted) 
to vertex $v$ in the respective campaigns.

\spara{Cascade model:}
We assume that information on the two 
campaigns propagates in the network
following the {in\-de\-pen\-dent-cascade model}~\cite{kempe2003maximizing}.
For instance, consider the propagation of the first campaign.  
The procedure for the second campaign is analogous.  
We assume that there exists a set of seeds $I_1$ 
from which the propagation process begins. 
These are
vertices in the network that support the campaign and are active in generating 
content in favor of the campaign.
Propagation in the independent-cascade model
proceeds in rounds.
At each round, there exists a set of active vertices $A_1$ 
(initially, $A_1 = I_1$), where each vertex $u \in A_1$ attempts to
activate each vertex $v \notin A_1$, such that $(u,v) \in E$, 
with probability $\pone(u,v)$.  
If the propagation attempt from a vertex $u$ to a vertex
$v$ is successful, we say that $v$ propagates the first campaign.  
At the end of each round, $A_1$ is set to be the set of vertices that propagated
the campaign during the current round.

Given a seed set $S$, we write $\reachone(S)$ and $\reachtwo(S)$ for the
vertices that are reached from $S$ using the aforementioned cascade process, for the respective campaign.
Note that since this process is random, both  
$\reachone(S)$ and $\reachtwo(S)$ are random variables.
Computing the expected number of active vertices is a \countingp-hard problem,  
however, we can approximate it within an arbitrary small factor $\epsilon$, 
with high probability, via Monte-Carlo simulations.
Due to this obstacle, all approximation algorithms that evaluate an objective function over diffusion processes immediately reduce their approximation by an additive~$\epsilon$.
Throughout this work we avoid repeating this fact for the sake of simplicity of the notation.

\spara{Heterogeneous vs.\ correlated propagations:}
Our model needs also to 
specify how the propagation on the two campaigns
interact with each other. 
We consider two settings: 
In the first setting, we assume that the campaign messages propagate
independently of each other. 
Given an edge $e = (u, v)$, 
the vertex $v$ is activated on the first campaign 
with probability $\pone(e)$, 
given that vertex $u$ is activated on the first campaign. 
Similarly, $v$ is activated on the second campaign 
with probability $\ptwo(e)$, 
given that $u$ is activated on the second campaign. 
We refer to this setting
as {\em heterogeneous}.\footnote{Although 
{\em independent} is probably a better term than {\em heterogeneous}, 
we adopt the latter to avoid any confusion with the 
independent-cascade model.}
In the second setting we assume that $\pone(e) = \ptwo(e)$, 
for each edge $e$. 
We further assume that the {\em coin flips}
for the propagation of the two campaigns 
are totally correlated. 
Namely, consider an edge $e = (u, v)$, 
where $u$ is reached by either or both campaigns.
Then with probability $\pone(e)$, \emph{any} campaign that has reached $u$, 
will also reach $v$.
We refer to this second setting
as {\em correlated}.

Note that in both settings, 
a vertex may be active by
{\em none}, {\em either}, or {\em both} campaigns.
This is in contrast to most existing work 
in competitive viral marketing, 
where it is assumed that a vertex can be activated by 
{\em at most one} campaign.
The intuition is that in our setting
activation means merely passing a message or posting an article, 
and it does not imply full commitment to the campaign.
We also note that the heterogeneous
setting is more {\em realistic} than the correlated, 
however, we also study the correlated model
as it is mathematically simpler.

\spara{Problem definition:}
We are now ready to state our problem
for {\em balancing information exposure} (\balanceprob).
Given a directed graph, initial seed sets for both campaigns
and a budget, we ask to find additional seeds that would balance the vertices.
More formally:

\begin{problem}[{\balanceprob}]
\label{problem:balance}
Let\ $G=(V,E, \pone, \ptwo)$ be a directed graph, and
two sets $I_1$ and $I_2$ of initial seeds of the two campaigns.
Assume that we are given a budget $k$.
Find two sets $S_1$ and $S_2$, 
where $\abs{S_1} + \abs{S_2} \leq k$ maximizing
\[
	\diff(S_1, S_2) = \mean{}{\abs{V \setminus \pr{\reachone(I_1 \cup S_1) \xor \reachtwo(I_2 \cup S_2)}}}.
\]
\end{problem}
The objective function $\diff(S_1, S_2)$ 
is the expected number of vertices that are either reached by both campaigns
or remain oblivious to both campaigns.
Problem~\ref{problem:balance} is defined for both settings, {\em heterogeneous}
and {\em correlated}.  When we need to make explicit the underlying setting we
refer to the respective problems by {\balanceprobh}
and {\balanceprobc}.
When referring to \balanceprobh, we denote the objective by $\diffind$.
Similarly,
when referring to \balanceprobc, we write $\difftc$.
We drop the indices, when we are referring to both models simultaneously.

\spara{Computational complexity:}
As expected, the optimization problem {\balanceprob} 
turns out to be \np-hard for both settings, heterogeneous and correlated.  
A straightforward
way to prove it is by setting $I_2 = V$, so the problems reduce to standard
influence maximization. However, we provide a stronger result.  Note
that instead of maximizing balanced vertices we can equivalently minimize the
imbalanced vertices. However, this turns to be a more difficult problem.

\begin{proposition}
\label{prop:nphard}
Assume a graph $G=(V,E, \pone, \ptwo)$ with two sets $I_1$ and $I_2$ and a budget $k$.
It is an \np-hard problem to decide whether there are sets $S_1$ and $S_2$
such that $\abs{S_1} + \abs{S_2} \leq k$ and
$
	\mean{}{\abs{\reachone(I_1 \cup S_1) \xor \reachtwo(I_2 \cup S_2)}} = 0.
$
\end{proposition}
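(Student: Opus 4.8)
The plan is to reduce from \textsc{Set Cover}. The key simplification is to assign probability $1$ to every edge in both campaigns, so that the independent-cascade process becomes deterministic; then $\reachone$ and $\reachtwo$ are ordinary graph-reachability sets, the random variable $\abs{\reachone(I_1 \cup S_1) \xor \reachtwo(I_2 \cup S_2)}$ is constant, and its expectation vanishes if and only if the two reachable sets coincide exactly. Since the edge set is shared by both campaigns and all coin flips are forced, this construction behaves identically in the \emph{heterogeneous} and the \emph{correlated} settings, so a single reduction settles both.

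Given a \textsc{Set Cover} instance with universe $U = \set{u_1, \dots, u_n}$, sets $C_1, \dots, C_m$, and target $t$, I would build the following graph. For each set $C_j$ create a \emph{set vertex} $c_j$; for each element $u_i$ create $M$ identical \emph{copy vertices}; and add an edge $c_j \to u$ from $c_j$ to every copy $u$ of every element contained in $C_j$. Put $I_1 = \emptyset$, let $I_2$ be the set of all copy vertices, set the budget to $k = 2t$, and choose the blow-up $M = k+1$. The forward direction is then immediate: a cover of size at most $t$ yields $S_1 = S_2 = \set{c_j : C_j \text{ is in the cover}}$, which uses budget at most $2t$ and makes both reachable sets equal to the union of all copies and the chosen set vertices, giving symmetric difference $0$.

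The substance of the argument, and the step I expect to be the main obstacle, is the converse: showing that \emph{any} seeding achieving balance within budget $k$ encodes a cover of size at most $t$. Here I would invoke three structural facts. First, since every copy lies in $\reachtwo(I_2 \cup S_2)$, balance forces every copy into $\reachone(I_1 \cup S_1)$, so campaign~$1$ must reach all copies. Second, set vertices have no incoming edges, so a set vertex can enter either reachable set only by being seeded there; hence any $c_j$ used by campaign~$1$ must also be seeded in campaign~$2$ (and vice versa) to remain balanced, costing $2$ budget units each and capping the number of usable set vertices at $t$. Third, the blow-up $M = k+1$ makes it impossible to cover all $M$ copies of even a single element by directly seeding copies, since that alone would exceed the budget, so every element must be covered through a seeded set vertex. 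Combining these, the at most $t$ matched set vertices must cover the whole universe, i.e.\ a cover of size at most $t$ exists, completing the \np-hardness reduction. The delicate point to get right is ruling out \emph{mixed} strategies that interleave a few direct copy-seeds with set-vertex seeds, which is precisely what the choice $M > k$ is designed to preclude.
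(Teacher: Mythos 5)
Your reduction is correct, and it shares the paper's skeleton --- a reduction from \textsc{Set Cover} with all edge probabilities set to $1$ (so the cascade is deterministic, the two propagation models coincide, and zero expected symmetric difference means the two reachability sets are literally equal), $I_1 = \emptyset$, $I_2$ covering the element side, and the observation that a set vertex has no in-edges and hence must be seeded in \emph{both} campaigns, costing $2$ per set. Where you genuinely diverge is in the gadget that enforces the budget accounting in the hard direction. The paper permits direct seeding of element vertices in $S_1$ and absorbs each such seed into the cover (``pick any set containing this element''); to keep the total cover size at $k$ it then needs a second gadget --- $k$ parallel copies of the set-vertex layer plus $k$ bottleneck vertices $b_j$, each copy wired to its own $b_j$ --- which forces $\abs{S_1 \cap V_3} \geq k - m$ and yields the bound $d_{11} + d_{12} \leq k$ through a chain of inequalities over the quantities $d_{ij} = \abs{S_i \cap V_j}$. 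You instead rule out direct element seeding wholesale by duplicating each element into $M = k+1$ indistinguishable copies, so that balancing any single element without a seeded set vertex already exceeds the budget; every element is then covered by one of the at most $t$ doubly-seeded set vertices, and no bottleneck layer or budget arithmetic is needed. Your version is arguably cleaner --- a standard duplication trick replacing two interacting gadgets and the $d_{ij}$ bookkeeping --- at the modest price of a polynomial blow-up (factor $k+1$) on the element side, whereas the paper blows up the set side by a factor of $k$ and adds $k$ extra vertices. You also correctly identified the one delicate point (mixed strategies interleaving copy-seeds with set-seeds) and why $M > k$ precludes them: if no set containing an element is seeded, all $M$ of its copies would have to be seeded directly, which is impossible within budget, while partial direct seeding is harmless since the derived cover consists only of the seeded set vertices.
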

\begin{proof}
	To prove the hardness we will use \textsc{set cover}.
	Here, we are given a universe $U$ and family of sets $C_1, \ldots, C_\ell$, and we are asked
	to select $k$ sets covering the universe $U$.
	
	To map this instance to our problem, we first define vertex set $V$ to consist
	of 3 parts, $V_1$, $V_2$ and $V_3$.  The first part corresponds to the universe $U$. The second part
	consists of $k$ copies of $\ell$ vertices, $i$th vertex in $j$th copy
	corresponds to $C_i$. 
	The third part consists of $k$ vertices $b_j$. The edges
	are as follows: a vertex $v$ in the $j$th copy, corresponding to a set $C_i$ is
	connected to the vertices corresponding to the elements in $C_i$, furthermore
	$v$ is connected to $b_j$. We set $\pone = \ptwo = 1$.
	The initial seeds are $I_1 = \emptyset$ and $I_2 = V_1 \cup V_3$.
	We set the budget to $2k$.
	
	Assume that there is a $k$-cover, $C_{i_1},\ldots, C_{i_k}$.
	We set 
	\[
	S_1 = S_2 = \set{\text{vertex corresponding to }  C_{i_j} \text{ in $j$th copy }}.
	\]
	It is easy to see that the imbalanced vertices in $I_2$ are exposed to the first campaign.
	Moreover, $S_1$ and $S_2$ do not introduce new imbalanced vertices. This makes the objective
	equals to 0.
	
	Assume that there exists a solution $S_1$ and $S_2$ with a zero cost.
	We claim that $\abs{S_1 \cap (V_1 \cup V_2)} \leq k$.
	To prove this, first note that $S_1 \cap V_2 = S_2 \cap V_2$,
	as otherwise vertices in $V_2$ are left unbalanced.
	Let $m = \abs{S_1 \cap V_2}$. Since $V_3$ must be balanced and each vertex in $V_2$ has only one edge to a vertex in $V_3$, there at least $ k $ vertices in $\abs{S_1 \cap \{V_2 \cup V_3\}}$, that is, we
	must have $\abs{S_1 \cap V_3} \geq k - m$.
	Let us write $d_{ij} = \abs{S_i \cap V_j}$.
	The budget constraints guarantee that
	\[
	d_{11} + d_{12} + d_{22} + d_{13} \leq \sum_{ij} d_{ij} \leq 2k,
	\]
	which can be rewritten as
	\[
	d_{11} + d_{12} \leq 2k - d_{22} - d_{13} \leq 2k - m - (k - m) = k.
	\]
	Construct $C$ as follows: for each $S_1 \cap V_2$, select the set that
	correponds to the vertex, for each $S_1 \cap V_1$, select any set that contain
	this vertex (there is always at least one set, otherwise the problem is
	trivially false).
	Since $V_1$ must be balanced, $C$ is a $k$-cover of $U$.
\hfill\end{proof}

This result holds for both models, 
even when $\pone = \ptwo = 1$.
This result implies that the minimization version of the problem is
\np-hard, and there is no algorithm with multiplicative approximation
guarantee. It also implies that
\balanceprobh and \balanceprobc are also \np-hard.
However, we will see later that we can obtain approximation
guarantees for these maximization problems.

\section{Greedy algorithms yielding approximation guarantees}
\label{sec:algorithm}
In this section we propose three greedy algorithms.
The first algorithm yields an approximation guarantee of $(1 - 1/e)/2$ for both
models. The remaining two algorithms yield a guarantee for the correlated
model only.

\spara{Decomposing the objective:}
Recall that
the objective function of the {\balanceprob} problem is  $\diff(S_1, S_2)$.
In order to show that this function admits an approximation guarantee, 
we decompose it into two components.
To do that, assume that we are given initial seeds $I_1$ and $I_2$, and let
us write
$
	X = \reachone(I_1) \cup \reachtwo(I_2), Y = V \setminus X.
$ 
Here $X$ are vertices reached by any initial seed 
in the two campaigns and $Y$ are the
vertices that are not reached at all. 
Note that $X$ and $Y$ are random variables.
Since $X$ and~$Y$ partition $V$, we can decompose the score $\diff(S_1, S_2)$ 
as
\[
\begin{split}
	\diff(S_1, S_2) & = \diffcover(S_1, S_2) + \diffnew(S_1, S_2), \quad\text{where} \\
	\diffcover(S_1, S_2) & = \mean{}{\abs{X \setminus \pr{\reachone(I_1 \cup S_1) \xor \reachtwo(I_2 \cup S_2)}}}, \\
	\diffnew(S_1, S_2) & = \mean{}{\abs{Y \setminus \pr{\reachone(I_1 \cup S_1) \xor \reachtwo(I_2 \cup S_2)}}}.
\end{split}
\]
We first show that $\diffcover(S_1, S_2)$ is monotone and submodular. 
It is well-known that for maximizing a function 
that has these two properties under a size constraint, 
the greedy algorithm computes an $(1-\frac{1}{e})$ 
approximate solution~\cite{nemhauser1978analysis}.

\begin{lemma}
	\label{lemma:static-joint-common-budget}
	$\diffcover(S_1, S_2)$ is monotone and submodular.
\end{lemma}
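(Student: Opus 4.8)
The plan is to fix a realization of the cascades, argue monotonicity and submodularity pointwise, and then take expectations. Throughout I regard $\diffcover$ as a set function on the ground set $V \times \set{1,2}$, where an element $(v,i)$ stands for ``add $v$ to $S_i$'': a pair $(S_1,S_2)$ is identified with the corresponding subset of $V \times \set{1,2}$, and the constraint $\abs{S_1}+\abs{S_2}\le k$ is exactly a cardinality constraint, which is what makes the later appeal to the greedy $(1-1/e)$ bound legitimate. First I fix a set of live edges (two independent copies in the heterogeneous model, a single shared copy in the correlated model), so that $\reachone$ and $\reachtwo$ become deterministic reachability operators. Writing $A=\reachone(I_1\cup S_1)$, $B=\reachtwo(I_2\cup S_2)$, and $X=\reachone(I_1)\cup\reachtwo(I_2)$, the key observation is that $X\subseteq A\cup B$, since reachability is monotone in the seed set, so $A\supseteq\reachone(I_1)$ and $B\supseteq\reachtwo(I_2)$. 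Hence a vertex $v\in X$ avoids $A\xor B$ \emph{if and only if} $v\in A\cap B$, and the per-realization value of $\diffcover$ equals $\abs{X\cap A\cap B}$. This is precisely why restricting to $X$ helps: on $X$ the ``oblivious to both'' option is unavailable, so balance collapses to ``reached by both,'' converting the troublesome symmetric difference into an intersection.

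Next I split $X$ into $\reachone(I_1)\cap\reachtwo(I_2)$, $T_2:=\reachone(I_1)\setminus\reachtwo(I_2)$, and $T_1:=\reachtwo(I_2)\setminus\reachone(I_1)$. Vertices of the first part lie in both $A$ and $B$ regardless of $S_1,S_2$, contributing a constant. A vertex of $T_2$ is already in $A$, so it is counted iff it is newly reached by the second campaign, i.e.\ iff it lies in $\reachtwo(I_2\cup S_2)$; symmetrically for $T_1$ and $S_1$. Thus the per-realization value is
\[
\abs{T_2\cap\reachtwo(I_2\cup S_2)}+\abs{T_1\cap\reachone(I_1\cup S_1)}+c,
\]
where $c$ does not depend on $S_1,S_2$.

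Each of the two terms is a coverage function. Writing $\rho_i(u)$ for the set of vertices reachable from $u$ in the campaign-$i$ live-edge graph, we have $T_2\cap\reachtwo(I_2\cup S_2)=T_2\cap\bigcup_{u\in I_2\cup S_2}\rho_2(u)$, whose cardinality, viewed as a function of the added seeds $S_2$, is the size of a union of fixed sets and hence monotone and submodular; the other term is likewise monotone and submodular in $S_1$. Crucially, the first term depends only on the campaign-$2$ coordinates of the ground set and the second only on the campaign-$1$ coordinates, so there is no cross interaction, and their sum (plus the constant $c$) is monotone and submodular on $2^{V\times\set{1,2}}$. Finally $\diffcover$ is the expectation of these per-realization functions over the random realizations; since monotonicity and submodularity are preserved under nonnegative combinations, and hence under expectation, $\diffcover$ is monotone and submodular. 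The correlated model is covered verbatim by using a single shared live-edge graph, so that $\rho_1=\rho_2$ and $A,B$ are reachable sets in the same graph; the collapse $\abs{X\cap A\cap B}$ and the coverage decomposition are identical.

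I expect the only real obstacle to be the first step: realizing that restricting the count to $X$ forces $X\subseteq A\cup B$, which is exactly what replaces the non-submodular symmetric difference by the intersection $\abs{X\cap A\cap B}$. Once that reduction is in place, expressing the value as a sum of two independent coverage functions on disjoint parts of the ground set, and lifting monotonicity and submodularity through the expectation, are routine.
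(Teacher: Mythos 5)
Your proposal is correct and takes essentially the same route as the paper's proof: both fix a live-edge realization, decompose the per-realization value of $\diffcover$ into a constant term (vertices in $\reachone(I_1)\cap\reachtwo(I_2)$) plus two non-interacting coverage terms (your $T_1\cap\reachone(I_1\cup S_1)$ and $T_2\cap\reachtwo(I_2\cup S_2)$ coincide with the paper's sets $C$ and $B$, since $T_2\cap\reachtwo(I_2\cup S_2)=T_2\cap\reachtwo(S_2)$), and conclude via closure of monotonicity and submodularity under nonnegative combinations and expectation. Your explicit observation that $X\subseteq\reachone(I_1\cup S_1)\cup\reachtwo(I_2\cup S_2)$ collapses the symmetric difference to the intersection $X\cap\reachone(I_1\cup S_1)\cap\reachtwo(I_2\cup S_2)$ is simply a cleaner articulation of what the paper states implicitly as its ``three types of vertices'' decomposition.
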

Before providing the proof, 
as a technicality, 
note that submodularity is usually
defined for functions with one argument. 
Namely, given a universe of items $U$, 
we consider functions of the type
$f:2^U\rightarrow\mathbb{R}$.
However, 
by taking $U=V\times\{1,2\}$
we can equivalently write our objectives 
as functions with one argument, i.e., 
$\diff, \diffcover, \diffnew:2^U\rightarrow\mathbb{R}$.


\begin{proof}
	The objective counts 3 types of vertices:
	(\emph{i})
	vertices covered by both initial seeds,
	(\emph{ii})
	additional vertices covered by $I_1$ and $S_2$, and
	(\emph{iii})
	additional vertices covered by $I_2$ and $S_1$. This allows us to decompose the objective as
	\[
	\diffcover(S_1, S_2) = \mean{}{\abs{A} + \abs{B} + \abs{C}},\quad\text{where}
	\]
	\[
	A  = \reachone(I_1) \cap \reachtwo(I_2), \quad
	B  = (\reachone(I_1) \setminus \reachtwo(I_2)) \cap \reachtwo(S_2), \quad
	C  = (\reachtwo(I_2) \setminus \reachone(I_1)) \cap \reachone(S_1). 
	\]
	
	Note that $A$ does not depend on $S_1$ and $S_2$.
	$B$ grows in size as we add more vertices to $S_2$, and
	$C$ grows in size as we add more vertices to $S_1$.
	This proves that the objective is monotone.
	
	To prove the submodularity,
	let us introduce some notation:
	given a set of edges $F$, we write $\reach(S; F)$
	to be the set of vertices that can be reached from $S$ via $F$.
	This allows us to define
	\[
	\begin{split}
	A(F_1, F_2) & = \reach(I_1; F_1) \cap \reach(I_2; F_2), \\
	B(F_1, F_2) & = (\reach(I_1; F_1) \setminus \reach(I_2; F_2)) \cap \reach(S_2; F_2), \\
	C(F_1, F_2) & = (\reach(I_2; F_2) \setminus \reach(I_1; F_1)) \cap \reach(S_1; F_1). 
	\end{split}
	\]
	The score $\diffcover(S_1, S_2)$ can be rewritten as
	\[
	\sum_{F_1, F_2} p(F_1, F_2) (\abs{A(F_1, F_2)} + \abs{B(F_1, F_2)} + \abs{C(F_1, F_2)}),
	\]
	where $p(F_1, F_2)$ is the probability of $F_1$ being the realization
	of the edges for the first campaign
	and $F_2$ being the realization
	of the edges for the second campaign.
	
	The first term $A(F_1, F_2)$ does not depend on $S_1$ or $S_2$.
	The second term is submodular as a function of $S_2$ and does not depend of $S_1$.
	The third term is submodular as a function of $S_1$ and does not depend of $S_2$.
	Since any linear combination of submodular function weighted by positive coefficients
	is also submodular, this completes the proof.
\end{proof}

We are ready to discuss our algorithms.

\para{Algorithm 0: ignore $\diffnew$.}
Our first algorithm is very simple: instead of maximizing $\diff$, we maximize
$\diffcover$, i.e., we ignore any vertices that are made imbalanced
during the process.  
Since $\diffcover$ is submodular and monotone
we can use the greedy algorithm.  
If we then compare the obtained result
with the empty solution, we get the promised approximation guarantee.
We refer to this algorithm as \coveralg.

\begin{proposition}
\label{lemma:heterogeneous-expected-approximation}
Let $\brak{S_1^*, S_2^*}$ be the optimal solution maximizing $\diff$.
Let $\brak{S_1, S_2}$ be the solution obtained via greedy algorithm maximizing $\diffcover$.
Then
\[
	\max\{\diff(S_1, S_2), \diff(\emptyset, \emptyset)\} \geq 
	\frac{1 - 1/e}{2}\diff(S_1^*, S_2^*).
\]
\end{proposition}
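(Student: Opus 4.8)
The plan is to play the two additive pieces of the objective off against the two solutions being compared. Recall the decomposition $\diff = \diffcover + \diffnew$ with both terms nonnegative, and that Lemma~\ref{lemma:static-joint-common-budget} makes $\diffcover$ monotone and submodular. Viewing everything over the combined ground set $V\times\set{1,2}$, the budget becomes a single cardinality constraint, so the standard greedy bound applies to $\diffcover$.

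First I would collect the cheap inequalities. Since $\diffnew\geq 0$, the greedy output $\brak{S_1,S_2}$ satisfies $\diff(S_1,S_2)\geq\diffcover(S_1,S_2)$, and because greedy is run on the monotone submodular $\diffcover$ while $\brak{S_1^*,S_2^*}$ is merely one feasible competitor, $\diffcover(S_1,S_2)\geq(1-1/e)\diffcover(S_1^*,S_2^*)$. Hence $\diff(S_1,S_2)\geq(1-1/e)\diffcover(S_1^*,S_2^*)$. Symmetrically, since $\diffcover\geq 0$, the empty solution gives $\diff(\emptyset,\emptyset)\geq\diffnew(\emptyset,\emptyset)$.

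The crux is to bound the $\diffnew$-mass of the optimum by its value at the empty seed set, i.e.\ to prove $\diffnew(S_1,S_2)\leq\diffnew(\emptyset,\emptyset)$ for every feasible $\brak{S_1,S_2}$. I would argue pointwise over edge realizations $\pr{F_1,F_2}$: the set $Y=V\setminus\pr{\reach(I_1;F_1)\cup\reach(I_2;F_2)}$ of vertices reached by neither initial seed set is fixed by $\pr{F_1,F_2}$ and does not depend on $S_1,S_2$; under the empty solution every $v\in Y$ is reached by neither campaign and hence balanced, so the integrand equals $\abs{Y}$; adding seeds only enlarges the reachable sets, so a vertex of $Y$ survives as balanced only if it is now reached by both campaigns or still by neither, never raising the count. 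Taking expectations yields the claim, and in particular $\diff(S_1^*,S_2^*)=\diffcover(S_1^*,S_2^*)+\diffnew(S_1^*,S_2^*)\leq\diffcover(S_1^*,S_2^*)+\diffnew(\emptyset,\emptyset)$. This pointwise reasoning is agnostic to how $F_1$ and $F_2$ are coupled, so it covers both the heterogeneous and the correlated model.

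Finally I would combine. Writing $a=\diffcover(S_1^*,S_2^*)$ and $b=\diffnew(\emptyset,\emptyset)$, the last bound reads $\diff(S_1^*,S_2^*)\leq a+b$, while the cheap inequalities give $\diff(S_1,S_2)\geq(1-1/e)a$ and $\diff(\emptyset,\emptyset)\geq b\geq(1-1/e)b$. Therefore $\max\set{\diff(S_1,S_2),\diff(\emptyset,\emptyset)}\geq(1-1/e)\max\set{a,b}\geq(1-1/e)\tfrac{a+b}{2}\geq\tfrac{1-1/e}{2}\diff(S_1^*,S_2^*)$, as required. The only nonroutine step is the crux: $\diffnew$ is globally neither monotone nor submodular, so the whole argument hinges on the structural observation that freshly reached vertices can only destroy balance, which pins the maximum of $\diffnew$ exactly at the empty seed set; everything else is bookkeeping around the greedy bound and a $\max\geq$ average inequality.
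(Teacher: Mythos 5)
Your proof is correct and follows essentially the same route as the paper: the decomposition $\diff=\diffcover+\diffnew$, the $(1-1/e)$ greedy bound on the monotone submodular $\diffcover$, the key inequality $\diffnew(S_1^*,S_2^*)\leq\diffnew(\emptyset,\emptyset)$, and a max-versus-average step to get the factor $\frac{1-1/e}{2}$. The only difference is cosmetic: you justify the crux inequality pointwise over realizations (and for all feasible pairs), where the paper states it in one sentence, and you bypass the explicit $\diffcover$-optimum $\brak{S_1',S_2'}$ by comparing greedy directly against the feasible competitor $\brak{S_1^*,S_2^*}$.
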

\begin{proof}
	Write $c = 1 - 1/e$.
	Let $\brak{S_1', S_2'}$ be the optimal solution maximizing $\diffcover$.
	Lemma~\ref{lemma:static-joint-common-budget} shows that $\diffcover(S_1, S_2) \geq c\diffcover(S_1', S_2')$.
	
	Note that $\diffnew(\emptyset, \emptyset) \geq \diffnew(S_1^*, S_2^*)$
	as the first term is the average of vertices not affected by the initial seeds.
	Thus,
	\[
	\begin{split}
	\diff(S_1^*, S_2^*) & = \diffcover(S_1^*, S_2^*) + \diffnew(S_1^*, S_2^*) 
	\leq \diffcover(S_1', S_2') + \diffnew(S_1^*, S_2^*) \\
	& \leq \diffcover(S_1', S_2') + \diffnew(\emptyset, \emptyset) 
	\leq \diffcover(S_1, S_2) / c + \diffnew(\emptyset, \emptyset) \\
	& \leq \diffcover(S_1, S_2) / c + \diffnew(\emptyset, \emptyset) / c \\
	& \leq (2/c)\, \max\{\diffcover(S_1, S_2),  \diffnew(\emptyset, \emptyset)\} \\
	& \leq (2/c)\, \max\{\diff(S_1, S_2),  \diff(\emptyset, \emptyset)\}, \\
	\end{split}
	\]
	which completes the proof.
\end{proof}

\para{Algorithm 1: force common seeds.}
Ignoring the $\diffnew$ term may prove costly as it is possible to introduce a lot
of new imbalanced vertices. The idea behind the second algorithm is to
force $\diffnew = 0$. We do this by either adding the same seeds to both campaigns,
or adding a seed that is covered by an opposing campaign.
This algorithm has guarantees only in the correlated setting with even budget $k$
but in practice we can use the algorithm also for the heterogeneous setting. 
We refer to this algorithm as \commonalg\ and the pseudo-code is given in Algorithm~\ref{alg:greedy2}.

\begin{algorithm}
\caption{\commonalg, greedy algorithm that only adds common seeds}
\label{alg:greedy2}
	$S_1 \define S_2 \define \emptyset$\;
	\While {$\abs{S_1} + \abs{S_2} \leq k$} {
		$c \define \arg \max_c \diff(S_1 \cup \set{c}, S_2 \cup \set{c})$\;
		$s_1 \define \arg \max_{s \in I_1} \diff(S_1, S_2 \cup \set{s})$\;
		$s_2 \define \arg \max_{s \in I_2} \diff(S_1 \cup \set{s}, S_2)$\;
		add the best option among $\brak{c, c}$, $\brak{\emptyset, s_1}$, $\brak{s_2, \emptyset}$
		to $\brak{S_1, S_2}$ while respecting the budget.
	}
\end{algorithm}

We first show in the following lemma that adding common seeds may halve the score, in the worst case. Then, we use this lemma to prove the approximation guarantee

\begin{lemma}
\label{lemma:bounded-solution-homogeneous-common-seeds}
Let $\brak{S_1, S_2}$ be a solution to
\balanceprobc, with an even budget $k$.
There exists a solution $\brak{S_1', S_2'}$ with $S_1' = S_2'$ such
that $ \difftc(S_1', S_2') \geq \difftc(S_1, S_2) /2$.

\end{lemma}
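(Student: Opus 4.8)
The plan is to route through the covered/new decomposition $\difftc = \diffcover + \diffnew$ and to exploit a special feature of common seeds in the correlated model: if $S_1' = S_2' = S'$, then in every edge realization $F$ a vertex outside $\reach(I_1;F)\cup\reach(I_2;F)$ is reached by the first campaign exactly when it is reached by the second (in both cases iff it lies in $\reach(S';F)$), hence it is always balanced. Therefore $\diffnew(S', S')$ attains its maximum possible value $\mean{}{\abs{Y}}$, and the only quantity left to control is $\diffcover(S', S')$. Specializing the decomposition of $\diffcover$ into the sets $A$, $B$, $C$ from the proof of Lemma~\ref{lemma:static-joint-common-budget} to the case $S_1 = S_2 = S'$, the two seed-dependent terms merge into a single coverage term, giving the clean identity $\difftc(S', S') = \difftc(\emptyset,\emptyset) + \mean{}{\abs{(\reach(I_1;F)\xor\reach(I_2;F)) \cap \reach(S';F)}}$: the gain of a common-seed solution over the empty solution is exactly the expected number of initially-imbalanced vertices that $S'$ reaches.

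I would then bound the arbitrary solution $\brak{S_1, S_2}$ from above. Keeping $A$ (which is seed-independent) and using the trivial bound $\diffnew(S_1, S_2) \le \mean{}{\abs{Y}}$, the same decomposition yields $\difftc(S_1, S_2) \le \difftc(\emptyset,\emptyset) + g$, where $g = \mean{}{\abs{B} + \abs{C}}$ is the expected number of vertices that $\brak{S_1, S_2}$ rebalances inside $X$. Comparing with the identity above, it then suffices to produce common seeds $S'$ with $\abs{S'} \le k/2$ whose reachable set covers, in expectation, at least $g/2$ of the initially-imbalanced vertices; this gives $\difftc(S', S') \ge \difftc(\emptyset,\emptyset) + g/2 \ge \tfrac12 \difftc(S_1, S_2)$, since $\difftc(\emptyset,\emptyset) \ge 0$.

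The main obstacle is feasibility, not value. The natural candidates $S' = S_1$ and $S' = S_2$ cover the $C$-part and the $B$-part of $g$ respectively, and because reachability distributes over unions, $\reach(S_1 \cup S_2; F) = \reach(S_1; F)\cup\reach(S_2; F)$, the union $\mathcal S = S_1 \cup S_2$ covers all of $g$. However, a common seed is charged twice against the budget, so a feasible common-seed solution may retain only $k/2$ seeds, whereas $S_1$ or $S_2$ alone can have as many as $k$ elements, making the better of the two natural candidates potentially infeasible. I would resolve this by a halving argument: partition $\mathcal S$, which has at most $k$ distinct vertices, into two parts $T_1$, $T_2$ each of size at most $\lceil\abs{\mathcal S}/2\rceil \le k/2$ (here the evenness of $k$ is used). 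The coverage map $T \mapsto \mean{}{\abs{(\reach(I_1;F)\xor\reach(I_2;F)) \cap \reach(T;F)}}$ is subadditive under unions of seed sets, so the coverages of $T_1$ and $T_2$ sum to at least that of $\mathcal S$, and the better half covers at least $g/2$. Setting $S_1' = S_2'$ equal to that half is within budget and attains the bound. The two points needing care are exactly the ones just isolated, namely that the split respects the doubled budget and that coverage is genuinely subadditive, and both follow from reachability distributing over unions of seeds.
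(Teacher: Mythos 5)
Your proof is correct and takes essentially the same route as the paper: both arguments halve the combined seed set $S_1 \cup S_2$ into two parts of size at most $k/2$, use each part as a common seed set (which in the correlated model introduces no new imbalance, so all of $Y$ stays balanced), and invoke per-realization subadditivity of coverage of the initially imbalanced vertices to conclude that the better half retains at least half the value. Your bookkeeping through the $\diffcover/\diffnew$ decomposition and the explicit slack $\difftc(\emptyset,\emptyset) \geq 0$ is a repackaging of the paper's direct inequality $\difftc(S_1,S_2) \leq \difftc(T,T) + \difftc(Q,Q)$, in which the doubled $\abs{A}$ and $\abs{D}$ terms absorb exactly that slack.
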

\begin{proof}
	As we are dealing with the correlated setting, we can
	write $\reach(S) = \reachone(S) = \reachtwo(S)$.
	Our first step is to decompose $\omega = \difftc(S_1, S_2)$ into several components.
	To do so,
	we partition the vertices based on their reachability from the initial seeds,
	\begin{align*}
	A & = \reach(I_1) \cap \reach(I_2), &
	B & = \reach(I_1) \setminus \reach(I_2), \\
	C & = \reach(I_2) \setminus \reach(I_1), &
	D & = V \setminus (\reach(I_1) \cup \reach(I_2)) .
	\end{align*}
	Note that these are all random variables.
	If $S_1 = S_2 = \emptyset$, then $\difftc(S_1, S_2) = \meantc{\abs{A} + \abs{D}}$.
	More generally, $S_1$ may balance some vertices in $C$, and $S_2$ may balance some vertices in $B$.
	We may also introduce new imbalanced vertices in $D$. To take this into account we define
	\begin{align*}
	B' & = B \cap \reach(S_2), &
	C' & = C \cap \reach(S_1), \\
	D' & = D \setminus (\reach(S_1) \xor \reach(S_2)).
	\end{align*}
	We can express the cost of $\difftc(S_1, S_2)$ as
	\[
	\omega = \difftc(S_1, S_2) = \meantc{\abs{A} + \abs{B'} + \abs{C'} + \abs{D'}}.
	\]
	Split $S_1 \cup S_2$ in two equal-size sets, $T$ and $Q$,
	and define
	\[
	\omega_1 = \difftc(T, T), \quad \omega_2 = \difftc(Q, Q).
	\]
	We claim that $\omega \leq \omega_1 + \omega_2$. This proves the proposition, since
	$\omega_1 + \omega_2 \leq 2\max \{\omega_1, \omega_2\}$.
	
	To prove the claim let us first split $T$ and $Q$,
	\[
	T_1 = T \cap S_1,\ 
	T_2 = T \cap S_2,\ 
	Q_1 = Q \cap S_1,\ 
	Q_2 = Q \cap S_2.
	\]
	Our next step is to decompose $\omega_1$ and $\omega_2$,
	similar to $\omega$.  To do that, we define
	\begin{align*}
	B_1 &= B \cap \reach(T_2), &
	B_2 &= B \cap \reach(Q_2), \\
	C_1 &= C \cap \reach(T_1), &
	C_2 &= C \cap \reach(Q_1).
	\end{align*}
	Note that, the pair $\brak{T, T}$ does not introduce new imbalanced nodes. 
	This leads to
	\[
	\omega_1 = \difftc(T, T) = \meantc{\abs{A} + \abs{B_1} + \abs{C_1} + \abs{D}},
	\]
	and similarly,
	\[
	\omega_2 = \difftc(Q, Q) = \meantc{\abs{A} + \abs{B_2} + \abs{C_2} + \abs{D}}.
	\]
	To prove $\omega \leq \omega_1 + \omega_2$, note that $\abs{D'} \leq \abs{D}$.
	In addition,
	\[
	\begin{split}
	\abs{B'} & = \abs{B \cap (\reach(T_2) \cup \reach(Q_2))} \\
	& \leq  \abs{B \cap \reach(T_2)} + \abs{B \cap \reach(Q_2)} = \abs{B_1} + \abs{B_2}
	\end{split}
	\]
	and
	\[
	\begin{split}
	\abs{C'} & = \abs{C \cap (\reach(T_1) \cup \reach(Q_1))} \\
	& \leq  \abs{C \cap \reach(T_1)} + \abs{C \cap \reach(Q_1)} = \abs{C_1} + \abs{C_2}.
	\end{split}
	\]
	Combining these inequalities proves the proposition.
\end{proof}

It is easy to
see that the greedy algorithm satisfies the conditions of the following
proposition.

\begin{proposition}
\label{prop:greedy}
Assume an iterative algorithm where at each iteration,
we add one or two vertices to our solution until our constraints are met.
Let $S_1^i$, $S_2^i$ be the sets after the $i$-th iteration, $S_1^0 = S_2^0 = \emptyset$.
Let $\eta_i = \difftc(S_1^i, S_2^i)$ be the cost after the $i$-th iteration.
Assume that $\eta_{i} \geq \eta_{i - 1}$. 
Assume further that for $i = 1, \ldots, k / 2$ it holds that
$
	\eta_i \geq \difftc(S_1^{i - 1} \cup \set{c}, S_2^{i - 1} \cup \set{c}).
$
Then the algorithm yields $(1 - 1/e)/2$ approximation.
\end{proposition}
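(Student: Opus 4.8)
The plan is to reduce the analysis to the optimization of the common-seed objective $g(W) := \difftc(W, W)$ and then run a standard greedy recursion against it. First I would invoke Lemma~\ref{lemma:bounded-solution-homogeneous-common-seeds} on the optimal solution $\brak{S_1^*, S_2^*}$: it produces a common-seed solution $\brak{W^*, W^*}$ with $\abs{W^*} \leq k/2$ and $g(W^*) = \difftc(W^*, W^*) \geq \frac{1}{2}\difftc(S_1^*, S_2^*)$. It therefore suffices to show that the algorithm's final value satisfies $\eta_{k/2} \geq (1 - 1/e)\,g(W^*)$, since chaining the two bounds yields the claimed $(1-1/e)/2$ factor.

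Next I would record the structural fact that makes $g$ tractable. In the correlated model, writing $\reach = \reachone = \reachtwo$ and letting $M = \reach(I_1) \xor \reach(I_2)$ be the initially imbalanced vertices, adding the same seed set $W$ to both campaigns never exposes a vertex to a single campaign, so $g(W) = \kappa + \meantc{\abs{M \cap \reach(W)}}$ with $\kappa = \meantc{\abs{V} - \abs{M}} \geq 0$. The variable part is an expected coverage function, hence monotone and submodular. More generally, for a current solution $P = \brak{S_1^{i-1}, S_2^{i-1}}$ the augmented map $G_P(W) := \difftc(S_1^{i-1} \cup W, S_2^{i-1} \cup W)$ has the same shape, a nonnegative constant plus an expected coverage function of $W$, because inserting a common seed $c$ only shrinks the symmetric difference $\reach(S_1^{i-1}) \xor \reach(S_2^{i-1})$ by $\reach(c)$ while enlarging the terms covering $\reach(I_1)\setminus\reach(I_2)$ and $\reach(I_2)\setminus\reach(I_1)$.

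With this in hand the greedy recursion is routine. Condition~(b) gives, for every $c$, that $\eta_i - \eta_{i-1} \geq \difftc(S_1^{i-1}\cup\set{c}, S_2^{i-1}\cup\set{c}) - \eta_{i-1} = G_P(\set{c}) - G_P(\emptyset)$, using $G_P(\emptyset) = \eta_{i-1}$. Restricting to $c \in W^*$, averaging, and applying submodularity of $G_P$ (the sum of singleton marginals dominates the marginal of the union) yields $\eta_i - \eta_{i-1} \geq \frac{1}{\abs{W^*}}\bigl(G_P(W^*) - \eta_{i-1}\bigr)$. Provided $G_P(W^*) \geq g(W^*)$, this is the standard recursion $g(W^*) - \eta_i \leq (1 - 1/\abs{W^*})\bigl(g(W^*) - \eta_{i-1}\bigr)$; unrolling it over $\abs{W^*} \leq k/2$ steps, together with $\eta_0 = \difftc(\emptyset,\emptyset) \geq 0$ and the monotonicity assumption $\eta_i \geq \eta_{i-1}$, gives $\eta_{k/2} \geq (1 - 1/e)\,g(W^*)$.

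The main obstacle is exactly the inequality $G_P(W^*) \geq g(W^*)$. Comparing the two term by term, the coverage terms of $G_P(W^*)$ dominate those of $g(W^*)$ (the current seeds can only reach more of the two one-sided regions), but the ``unreached'' term may be strictly smaller: if the current solution has already exposed some initially-unreached vertex to a single campaign, $g(W^*)$ counts it as balanced while $G_P(W^*)$ does not, and a common seed cannot repair it unless it happens to reach it. I would close this gap by establishing the invariant that the algorithm never creates such one-sidedly-exposed unreached vertices. Each admissible move is either a common seed $\brak{c,c}$, whose reach is added to both campaigns and is therefore balanced, or a seed drawn from the opposite campaign's initial set, whose reach lies inside $\reach(I_1)$ or $\reach(I_2)$ and hence, in the correlated model, cannot touch the unreached region at all. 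Under this invariant the unreached term of $G_P$ equals that of $g$, the term-by-term comparison gives $G_P(W^*) \geq g(W^*)$, and the recursion goes through, yielding the $(1-1/e)/2$ guarantee.
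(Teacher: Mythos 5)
You have located the right crux, but your resolution of it does not prove the proposition as stated. Your reduction via Lemma~\ref{lemma:bounded-solution-homogeneous-common-seeds}, the identity $G_P(W) = \kappa_P + \meantc{\abs{M_P \cap \reach(W)}}$, and the greedy recursion are all fine; the gap is the invariant you invoke to secure $G_P(W^*) \geq g(W^*)$. You justify it by asserting that every admissible move is either a common pair $\brak{c,c}$ or a seed drawn from the opposing campaign's initial set --- but that describes the moves of \commonalg, and nothing in the proposition's hypotheses restricts the algorithm this way. The hypotheses only require $\eta_i \geq \eta_{i-1}$ and that each iteration dominates the best common-seed augmentation, and the paper applies the proposition verbatim to \hedgealg, whose moves $\brak{\emptyset, s_1}$, $\brak{s_2, \emptyset}$, $\brak{s_2, s_1}$ range over \emph{arbitrary} vertices; \hedgealg is introduced precisely as the relaxation of \commonalg that permits creating new imbalanced vertices. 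A pair move can balance a large one-sided region while one-sidedly exposing a few fresh vertices in the unreached region and still beat every single common seed, so your invariant genuinely fails for such runs. Once the current solution has one-sidedly exposed a vertex that $\reach(W^*)$ misses, $g(W^*)$ counts it as balanced while $G_P(W^*)$ does not, and your recursion target collapses. In short, your argument proves the guarantee for \commonalg-type move sets only, which is strictly weaker than the statement.

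The paper's proof avoids comparing against the fixed initial benchmark altogether: it re-centers the decomposition at every iteration, taking $X_i$ to be the reach of the \emph{current} solution and writing $f = g_i + h_i$ accordingly, and it proves a bespoke greedy lemma (Lemma~\ref{lem:partialgreedy}) that tolerates a decomposition varying with $i$, requiring only that $g_i$ be monotone submodular and that $h_i$ be constant on subsets of $C_{i-1} \cup T$. The constancy of $h_i$ holds with no assumption about past damage, because $T$ consists of common pairs $\brak{d,d}$: in the correlated model, anything they reach outside $X_i$ is reached by both campaigns, so imbalance created by earlier moves is swept into $X_i$ and simply charged to $f(C_{i-1})$. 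That said, you have put your finger on the delicate point of the paper's route as well: the claim that $g_i$ is monotone, imported from Lemma~\ref{lemma:static-joint-common-budget}, needs exactly the care you raise. A vertex of $X_i$ reached only by previously added one-sided seeds is ``reached by neither'' under the seeds of $T$ alone, so $g_i(T) \leq g_i(C_{i-1} \cup T)$ is not automatic for \hedgealg-style histories; the argument of Lemma~\ref{lemma:static-joint-common-budget} covers only vertices reached by the seed sets appearing inside the objective. So your obstacle is the real one, but your fix narrows the proposition's scope rather than proving it; to cover the general statement you would need either the re-centered decomposition together with a justification of $g_i$'s monotonicity on the relevant sets, or a recursion that exploits the hypothesis's domination over \emph{all} common seeds $c$ (not merely $c \in W^*$) to pay for newly created imbalance.
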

To prove the proposition, we need the following technical lemma, which is a twist
of a standard technique for proving the approximation ratio of the greedy algorithm on submodular functions.

\begin{lemma}
	\label{lem:partialgreedy}
	Assume a universe $U$. Let $\funcdef{f}{2^U}{\real}$ be a positive function.
	Let $T \subseteq U$ be a set with $k$ elements.
	Let $C_0 \subseteq \cdots \subseteq C_k$ be a sequence of subsets of $U$.
	Assume that $f(C_i) \geq \max_{t \in T} f(C_{i - 1} \cup \set{t})$.
	
	Assume further that for each $i = 1, \ldots, k$, we can decompose $f$ as $f = g_i + h_i$ such
	that
	\begin{enumerate}
		\item $g_i$ is submodular and monotonically increasing function,
		\item $h_i(W) = h_i(C_{i - 1})$, for any $W \subseteq T \cup C_{i - 1}$.
	\end{enumerate}
	
	Then $f(C_k) \geq (1 - 1/e) f(T)$.
\end{lemma}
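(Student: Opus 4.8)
The plan is to adapt the classical analysis of the greedy algorithm for monotone submodular maximization, with the twist that the relevant submodular function is allowed to change from step to step. The quantity I would track is the residual $a_i = f(C_k)$\,--\,style gap $a_i = f(T) - f(C_i)$, and the goal is to establish the one-step contraction $a_i \le (1 - 1/k)\,a_{i-1}$. Iterating this $k$ times, together with $(1-1/k)^k \le 1/e$ and the positivity of $f$ (which gives $a_0 = f(T) - f(C_0) \le f(T)$), then yields $f(T) - f(C_k) = a_k \le e^{-1} f(T)$, i.e.\ $f(C_k) \ge (1 - 1/e)\,f(T)$.

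To obtain the contraction at step $i$, I would first use the decomposition $f = g_i + h_i$ to rewrite $f(T)$. Since $T \subseteq T \cup C_{i-1}$, hypothesis~(2) gives $h_i(T) = h_i(C_{i-1})$, so $f(T) = g_i(T) + h_i(C_{i-1})$. Next I would bound $g_i(T)$ using the monotonicity and submodularity of $g_i$, via the standard telescoping estimate $g_i(T) \le g_i(T \cup C_{i-1}) \le g_i(C_{i-1}) + \sum_{t \in T \setminus C_{i-1}} \big(g_i(C_{i-1} \cup \set{t}) - g_i(C_{i-1})\big)$.

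The key step, and the one I expect to carry the real content, is to translate the per-element marginals of $g_i$ into marginals of the actual objective $f$. Here I would invoke hypothesis~(2) a second time: for any $t \in T$ the set $C_{i-1} \cup \set{t}$ lies inside $T \cup C_{i-1}$, so $h_i(C_{i-1} \cup \set{t}) = h_i(C_{i-1})$, and hence $g_i(C_{i-1} \cup \set{t}) - g_i(C_{i-1}) = f(C_{i-1} \cup \set{t}) - f(C_{i-1})$. This is exactly what makes the $h_i$ contribution cancel uniformly and lets the rest of the argument proceed purely in terms of $f$. The greedy-type hypothesis $f(C_i) \ge \max_{t \in T} f(C_{i-1} \cup \set{t})$ then bounds each such marginal by $f(C_i) - f(C_{i-1})$, and since the sum has at most $\abs{T} = k$ terms I obtain $g_i(T) \le g_i(C_{i-1}) + k\big(f(C_i) - f(C_{i-1})\big)$.

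Finally I would add $h_i(C_{i-1})$ back to both sides: the left-hand side becomes $f(T)$ and the constant term recombines as $g_i(C_{i-1}) + h_i(C_{i-1}) = f(C_{i-1})$, giving $f(T) \le f(C_{i-1}) + k\big(f(C_i) - f(C_{i-1})\big)$. Rearranging yields $f(C_i) - f(C_{i-1}) \ge \tfrac{1}{k}\big(f(T) - f(C_{i-1})\big)$, which is precisely $a_i \le (1 - 1/k)\,a_{i-1}$, closing the induction. The main subtlety to keep in mind is that $C_i$ need not arise from $C_{i-1}$ by adding a single element of $T$ (it may add several vertices, or vertices lying outside $T$); the argument never needs this, since it only ever compares $f(C_i)$ against the single-element test additions $C_{i-1} \cup \set{t}$ supplied by the hypothesis, and it is the local invariance of $h_i$ that bridges these test quantities with the submodular surrogate $g_i$.
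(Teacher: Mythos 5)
Your proposal is correct and follows essentially the same route as the paper's own proof: the same use of the invariance of $h_i$ on subsets of $T \cup C_{i-1}$ to convert the $g_i$-marginals into $f$-marginals, the same submodular telescoping bound on $g_i(T)$, the same application of the greedy hypothesis to get $f(T) \leq f(C_{i-1}) + k\bigl(f(C_i) - f(C_{i-1})\bigr)$, and the same $(1-1/k)$-contraction of the gap $f(T) - f(C_i)$ iterated $k$ times with $f(C_0) \geq 0$. The only cosmetic difference is that you sum marginals over $T \setminus C_{i-1}$ while the paper sums over all of $T$; the extra terms are zero, so the two are identical.
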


\begin{proof}
	The assumptions of the propositions imply
	\[
	\begin{split}
	f(T) & = g_i(T) + h_i(T) \\
	& = g_i(T) + h_i(C_{i - 1}) \\
	& \leq g_i(C_{i - 1}) + h_i(C_{i - 1}) + \sum_{t \in T} g_i(C_{i - 1} \cup \set{t}) - g_i(C_{i - 1}) \\
	& = f(C_{i - 1}) + \sum_{t \in T} h_i(C_{i - 1}) + g_i(C_{i - 1} \cup \set{t}) - g_i(C_{i - 1}) - h_i(C_{i - 1}) \\
	& = f(C_{i - 1}) + \sum_{t \in T} h_i(C_{i - 1} \cup \set{t}) + g_i(C_{i - 1} \cup \set{t}) - g_i(C_{i - 1}) - h_i(C_{i - 1}) \\
	& = f(C_{i - 1}) + \sum_{t \in T} f(C_{i - 1} \cup \set{t}) - f(C_{i - 1}) \\
	& \leq f(C_{i - 1}) + k (f(C_{i}) - f(C_{i - 1})),
	\end{split}
	\]
	where the first inequality is due to the submodularity of $g_i$, and is a standard trick to prove the approximation
	ratio for the greedy algorithm.
	
	We can rewrite the above inequality as
	\[
	kf(T) + (1 - k)f(T) = f(T) \leq f(C_{i - 1}) + k (f(C_{i}) - f(C_{i - 1})).
	\]
	Rearranging the terms leads to
	\[
	\frac{k - 1}{k}(f(C_{i - 1}) - f(T)) \leq f(C_{i}) - f(T) \quad.
	\]
	Applying induction over $i$, yields
	\[
	f(C_{k}) - f(T) \geq \pr{\frac{k - 1}{k}}^k(f(C_{0}) - f(T)) \geq \frac{1}{e} (f(C_{0}) - f(T)) \geq - f(T) / e,
	\]
	leading to $f(C_k) \geq (1 - 1/e) f(T)$.
\end{proof}

We can now prove the main claim.
Note that since we are using the correlated model, we have $\reachone = \reachtwo$. For notational simplicity, we will write 
$\reach = \reachone = \reachtwo$.

\begin{proof}[Proof of Proposition~\ref{prop:greedy}]
	Let $\mathit{OPT}$ be the cost of the optimal solution.
	Let $D$ be the solution maximizing $\difftc(D, D)$ with $\abs{D} \leq k / 2$.
	Lemma~\ref{lemma:bounded-solution-homogeneous-common-seeds} guarantees that $\mathit{OPT} /2 \leq \difftc(D, D)$.
	
	In order to apply Lemma~\ref{lem:partialgreedy}, we first define
	the universe $U$ as
	\[
	U = \set{\brak{u, v} \mid u, v \in V} \cup \set{\brak{v, \emptyset} \mid v \in V} \cup \set{\brak{\emptyset, v} \mid v \in V}.
	\]
	
	The sets are defined as 
	\[
	C_i = \set{\brak{v, \emptyset} \mid v \in S^i_1} \cup \set{\brak{\emptyset, v} \mid v \in S^i_2}.
	\]
	
	Given a set $C \subseteq U$, let us define $\pi_1(C) = \set{v \mid \brak{v, u} \in C, v \neq \emptyset}$ to be the union of 
	the first entries in $C$. Similarly, define $\pi_2(C) = \set{v \mid \brak{u, v} \in C, v \neq \emptyset}$.
	
	We can now define $f$ as $f(C) = \difftc(\pi_1(C), \pi_2(C))$.
	To decompose $f$,
	let us first write
	\[
	X_i = \reach(I_1 \cup \pi_1(C_{i - 1})) \cup \reach(I_2 \cup \pi_2(C_{i - 1})) = \reach(I_1 \cup S^{i - 1}_1) \cup \reach(I_2 \cup S^{i - 1}_2), \quad Y_i = V \setminus X_i.
	\]
	and set
	\[
	\begin{split}
	g_i(C) & = \mean{}{\abs{X_i \setminus (\reach(I_1 \cup \pi_1(C)) \xor \reach(I_2 \cup \pi_2(C)))}}, \\
	h_i(C) & = \mean{}{\abs{Y_i \setminus (\reach(I_1 \cup \pi_1(C)) \xor \reach(I_2 \cup \pi_2(C)))}}.
	\end{split}
	\]
	Finally, we set $T = \set{\brak{d, d} \mid d \in D}$.
	
	First note that $f = g_i + h_i$ since $X_i \cap Y_i = \emptyset$.
	The proof of Lemma~\ref{lemma:static-joint-common-budget}
	shows that $g_i$ is monotonically increasing and submodular. 
	
	Let $C \subseteq C_{i - 1} \cup T$. If there is a vertex $v$ in
	$\reach(I_1 \cup \pi_1(C))$ but not in $X_i$, then this means $v$ was influenced
	by $d \in D$. Since $d \in \pi_2(C)$, we have $v \in \reach(I_2 \cup \pi_2(C))$.
	That is,
	\[
	\reach(I_1 \cup \pi_1(C)) \setminus X_i = \reach(I_2 \cup \pi_2(C)) \setminus X_i.
	\]
	Since $Y_i$ and $X_i$ are disjoint, this gives us
	\[
	\begin{split}
	h_i(C)
	& = \mean{}{\abs{Y_i \setminus (\reach(I_1 \cup \pi_1(C)) \xor \reach(I_2 \cup \pi_2(C)))}}  \\
	& = \mean{}{\abs{Y_i \setminus ((\reach(I_1 \cup \pi_1(C)) \setminus X_i) \xor (\reach(I_2 \cup \pi_2(C)) \setminus X_i))}}  \\
	& = \mean{}{\abs{Y_i}} .
	\end{split}
	\]
	
	That is, $h_i(C)$ is constant for any $C \subseteq C_{i - 1} \cup T$.
	Thus, $h_i(C) = h_i(C_{i - 1})$.
	
	Finally, the assumpion of the proposition guarantees that $f(C_i) \geq f(C_{i - 1} \cup \set{t})$, for $t \in T$.
	
	Thus, these definitions meet all the prerequisites of Lemma~\ref{lem:partialgreedy}, guaranteeing
	that
	\[
	(1 - 1/e)\difftc(D, D) \leq \difftc(S_1^{k/2}, S_2^{k/2}) \leq \difftc(S_1^{k}, S_2^{k}).
	\]
	Since $\mathit{OPT} /2 \leq \difftc(D, D)$, the result follows.
\end{proof}

\para{Algorithm 2: common seeds as baseline.}
Not allowing new imbalanced vertices may prove to be too restrictive.
We can relax this condition by allowing new imbalanced vertices as long as 
the gain is at least as good as adding a common seed. 
We refer to this algorithm as Hedge and the pseudo-code is given in Algorithm~\ref{alg:greedy3}.
The approximation guarantee for this algorithm---in the correlated setting 
and with even budget---follows immediately from
Proposition~\ref{prop:greedy} as it also satisfies the conditions.

\begin{algorithm}
\caption{\hedgealg, greedy algorithm, where each step is as good as adding the best common seed}
\label{alg:greedy3}
	$S_1 \define S_2 \define \emptyset$\;
	\While {$\abs{S_1} + \abs{S_2} \leq k$} {
		$c \define \arg \max_c \diff(S_1 \cup \set{c}, S_2 \cup \set{c})$\;
		$s_1 \define \arg \max_{s} \diff(S_1, S_2 \cup \set{s})$\;
		$s_2 \define \arg \max_{s} \diff(S_1 \cup \set{s}, S_2)$\;
		add the best option among $\brak{c, c}$, $\brak{\emptyset, s_1}$, $\brak{s_2, \emptyset}$, $\brak{s_2, s_1}$,
		to $\brak{S_1, S_2}$ while respecting the budget.
	}
\end{algorithm}


\section{Experimental evaluation}
\label{sec:experiments}

In this section, 
we evaluate the effectiveness of our algorithms on real-world datasets. 
We focus on 
($i$) analyzing the quality of the seeds picked by our algorithms in comparison 
to other heuristic approaches and baselines;
($ii$) analyzing the efficiency and the scalability of our algorithms; 
and 
($iii$) providing anecdotal examples of the obtained results. 

For all experiments
we report averages over $1\,000$ 
random simulations of the cascade process.
As argued by~\citet{kempe2003maximizing}, 
a random cascade can be generated in advance by sampling each edge $e$
with probability $\pone(e)$ or $\ptwo(e)$, 
depending on the model and the campaign.
In all experiments we set $k$ to range between $5$ and $50$ with a step of $5$.

\spara{Datasets:}
To evaluate the effectiveness of our algorithms, 
we run experiments on real-world data collected from twitter.
Let $G=(V,E)$ be the twitter follower graph.
A directed edge $(u,v)\in E$ indicates that user $v$ follows $u$; 
note that the edge direction indicates the ``information flow''
from a user to their followers.
We define a cascade $G_X=(X,E_X)$ as a graph over the set of users $X\subseteq V$ 
who have retweeted at least one hashtag related to a topic (e.g., US elections).
An edge $(u,v)\in E_X\subseteq E$ indicates that $v$ retweeted $u$.

We use datasets from six topics with opposing viewpoints, covering 
politics ({\uselections}, {\brexit}, {\obamacare}), 
policy ({\abortion, \fracking}), and 
lifestyle ({\iphone}, focusing on iPhone vs.\ Samsung).
All datasets are collected by filtering the twitter streaming API 
(1\% random sample of all tweets) 
for a set of keywords used in previous work~\cite{lu2015biaswatch}. 
For each dataset, we identify two sides 
(indicating the two view-points) on the retweet graph, 
which has been shown to capture best the two opposing sides 
of a controversy~\cite{garimella2016quantifying}.
Details on the statistics of the dataset are shown in Table~\ref{tab:datasets}.

\begin{table}
	\setlength{\tabcolsep}{1.3mm}
	\centering
	\caption{\label{tab:datasets}Dataset statistics. The column $|C|$ refers to the average number of edges in a randomly generated cascade in the correlated case, while $|C_1|$ and $|C_2|$ refer to average number of edges generated in a cascade of the campaigns $1$ and $2$, respectively, in the heterogeneous case.}
	\begin{tabular}{@{}lrrrrr}
		\toprule
		Dataset & \# Nodes & \# Edges & $|C|$ & $|C_1|$ &  $|C_2|$ \\
		\midrule
		\abortion & 279\,505 & 671\,144  & 2\,105 & 326 & 1\,801  \\
		\brexit & 22\,745 & 48\,830  & 476 & 113  & 390 \\
		\fracking & 374\,403 & 1\,377\,085  & 4\,156 & 1\,595 & 3\,103  \\
		\iphone & 36\,742 & 49\,248  & 4\,776 & 339 & 4\,478  \\
		\obamacare & 334\,617 & 1\,511\,670 & 6\,614 & 2\,404  & 4\,527  \\
		\uselections & 80\,544 & 921\,368 & 4\,697 & 3\,097 & 12\,044 \\
		\bottomrule
	\end{tabular}
\end{table}

After building the graphs, we need to estimate the diffusion probabilities 
for the heterogeneous and correlated models.
Note that the estimation of the diffusion probabilities 
is orthogonal to our contribution in this paper. 
For the sake of concreteness we have used the approach described below. 
One could use a different, more advanced, method; 
our methods are still applicable.

Let $\sideone(v)$ and $\sidetwo(v)$ 
be an {\em a priori} probability of a user $v$ retweeting sides 1 and 2, respectively.
These are measured from the data by looking at how often 
a user retweets content from users and keywords 
that are discriminative of each side.
For example, for \uselections, the discriminative users and keywords for side Hillary would be 
@hillaryclinton and \#imwither, and for 
Trump, @realdonaldtrump and \#makeamericagreatagain.
Additional details on data collection, and user/keyword sets for 
side identification are given in Table~\ref{tab:description} in Appendix~\ref{sec:expextra}.

The probability that user $v$ retweets user $u$ (cascade probability) is
then defined~as
\begin{equation*}
\pii(u,v) = 
\alpha\,  \sideii(v) + (1-\alpha) \left(\frac{\retweets(u,v)+1}{\retweets(v)+2}\right), 
\quad i=1,2,
\end{equation*}
where $\retweets(u,v)$ is the number of times $v$ has retweeted $u$, and 
$\retweets(v)$ is the total number of retweets of user $v$. 
The cascade probabilities $\pii$ capture the fact that 
users retweet content 
if they see it from their friends 
(term $\frac{\retweets(u,v)+1}{\retweets(v)+2}$)
or based on their own biases (term~$\sideii(v)$).
The additive terms in the numerator and denominator 
provide an additive smoothing by Laplace's rule of succession.

We set the value of $\alpha$ to 0.8 for the heterogeneous setting. For the correlated setting, $\alpha$ is set to zero.










\begin{figure}[t!]
\newlength{\imgwidth}
\setlength{\imgwidth}{3cm}
\newlength{\imgwidthhet}
\setlength{\imgwidthhet}{3cm}
\setlength{\tabcolsep}{1pt}
\begin{tabular}{rrr}
\begin{tikzpicture}
\begin{axis}[xlabel={budget $k$},ylabel= {symm. diff.},
    width = \imgwidthhet,
    height = 2cm,
	title = {\iphone},
    cycle list name=yaf,
    scale only axis,
    x tick label style = {/pgf/number format/set thousands separator = {\,}},
    y tick label style = {/pgf/number format/set thousands separator = {\,}},
    scaled ticks = false,
    xtick = {10, 20, 30, 40, 50},
	every axis plot post/.append style= {line width=1pt},
    ]
\addplot table[x index = 0, y index = 1, header = true] {heterogeneous_experimental_data_iphone_samsung.txt};
\addplot table[x index = 0, y index = 2, header = true] {heterogeneous_experimental_data_iphone_samsung.txt};
\addplot table[x index = 0, y index = 4, header = true] {heterogeneous_experimental_data_iphone_samsung.txt};
\addplot table[x index = 0, y index = 3, header = true] {heterogeneous_experimental_data_iphone_samsung.txt};
\pgfplotsextra{\yafdrawaxis{5}{50}{295}{682}}
\end{axis}
\end{tikzpicture} &
\begin{tikzpicture}
\begin{axis}[xlabel={budget $k$},ylabel= {symm. diff.},
    width = \imgwidthhet,
    height = 2cm,
	title = {\obamacare},
    cycle list name=yaf,
    scale only axis,
    x tick label style = {/pgf/number format/set thousands separator = {\,}},
    y tick label style = {/pgf/number format/set thousands separator = {\,}},
    scaled ticks = false,
    xtick = {10, 20, 30, 40, 50},
	every axis plot post/.append style= {line width=1pt},
    ]
\addplot table[x index = 0, y index = 1, header = true] {heterogeneous_experimental_data_obamacare.txt};
\addplot table[x index = 0, y index = 2, header = true] {heterogeneous_experimental_data_obamacare.txt};
\addplot table[x index = 0, y index = 4, header = true] {heterogeneous_experimental_data_obamacare.txt};
\addplot table[x index = 0, y index = 3, header = true] {heterogeneous_experimental_data_obamacare.txt};
\pgfplotsextra{\yafdrawaxis{5}{50}{1235}{2648}}
\end{axis}
\end{tikzpicture} &
\begin{tikzpicture}
\begin{axis}[xlabel={budget $k$},ylabel= {symm. diff.},
    width = \imgwidthhet,
    height = 2cm,
	every axis plot post/.append style= {line width=1pt},
    cycle list name=yaf,
    scale only axis,
	title = {\uselections},
    x tick label style = {/pgf/number format/set thousands separator = {\,}},
    y tick label style = {/pgf/number format/set thousands separator = {\,}},
    scaled ticks = false,
    xtick = {10, 20, 30, 40, 50},
    legend style={nodes={scale=0.75, transform shape}},
	legend entries = {\coveralg, \hedgealg, \commonalg, \greedyalg}
    ]
\addplot table[x index = 0, y index = 1, header = true] {heterogeneous_experimental_data_uselections.txt};
\addplot table[x index = 0, y index = 2, header = true] {heterogeneous_experimental_data_uselections.txt};
\addplot table[x index = 0, y index = 4, header = true] {heterogeneous_experimental_data_uselections.txt};
\addplot table[x index = 0, y index = 3, header = true] {heterogeneous_experimental_data_uselections.txt};
\pgfplotsextra{\yafdrawaxis{5}{50}{231}{1934}}
\end{axis}
\end{tikzpicture} \\[-3mm]

\begin{tikzpicture}
\begin{axis}[xlabel={budget $k$},ylabel= {symm. diff.},
    width = \imgwidth,
    height = 2cm,
	title = {\iphone},
    cycle list name=yaf,
    scale only axis,
    x tick label style = {/pgf/number format/set thousands separator = {\,}},
    y tick label style = {/pgf/number format/set thousands separator = {\,}},
    scaled ticks = false,
    xtick = {10, 20, 30, 40, 50},
	every axis plot post/.append style= {line width=1pt},
    ]
\addplot table[x index = 0, y index = 1, header = true] {homogeneous_experimental_data_iphone_samsung.txt};
\addplot table[x index = 0, y index = 2, header = true] {homogeneous_experimental_data_iphone_samsung.txt};
\addplot table[x index = 0, y index = 4, header = true] {homogeneous_experimental_data_iphone_samsung.txt};
\addplot table[x index = 0, y index = 3, header = true] {homogeneous_experimental_data_iphone_samsung.txt};
\pgfplotsextra{\yafdrawaxis{5}{50}{0}{91}}
\end{axis}
\end{tikzpicture} &
\begin{tikzpicture}
\begin{axis}[xlabel={budget $k$},ylabel= {symm. diff.},
    width = \imgwidth,
    height = 2cm,
	title = {\obamacare},
    cycle list name=yaf,
    scale only axis,
    x tick label style = {/pgf/number format/set thousands separator = {\,}},
    y tick label style = {/pgf/number format/set thousands separator = {\,}},
    scaled ticks = false,
    xtick = {10, 20, 30, 40, 50},
	every axis plot post/.append style= {line width=1pt},
    ]
\addplot table[x index = 0, y index = 1, header = true] {homogeneous_experimental_data_obamacare.txt};
\addplot table[x index = 0, y index = 2, header = true] {homogeneous_experimental_data_obamacare.txt};
\addplot table[x index = 0, y index = 4, header = true] {homogeneous_experimental_data_obamacare.txt};
\addplot table[x index = 0, y index = 3, header = true] {homogeneous_experimental_data_obamacare.txt};
\pgfplotsextra{\yafdrawaxis{5}{50}{2}{490}}
\end{axis}
\end{tikzpicture} &
\begin{tikzpicture}
\begin{axis}[xlabel={budget $k$},ylabel= {symm. diff.},
    width = \imgwidth,
    height = 2cm,
	every axis plot post/.append style= {line width=1pt},
    cycle list name=yaf,
    scale only axis,
	title = {\uselections},
    x tick label style = {/pgf/number format/set thousands separator = {\,}},
    y tick label style = {/pgf/number format/set thousands separator = {\,}},
    scaled ticks = false,
    xtick = {10, 20, 30, 40, 50},
    ]
\addplot table[x index = 0, y index = 1, header = true] {homogeneous_experimental_data_uselections.txt};
\addplot table[x index = 0, y index = 2, header = true] {homogeneous_experimental_data_uselections.txt};
\addplot table[x index = 0, y index = 4, header = true] {homogeneous_experimental_data_uselections.txt};
\addplot table[x index = 0, y index = 3, header = true] {homogeneous_experimental_data_uselections.txt};
\pgfplotsextra{\yafdrawaxis{5}{50}{0}{2208}}
\end{axis}
\end{tikzpicture} \\[-1mm]
\end{tabular}
\caption{Expected symmetric difference $n - \difftc$ as a function of the budget $k$. Top row, heterogeneous model, bottom row: Correlated model. Low values are better.}
\label{fig:score}

\vspace{-.15cm}
\end{figure}
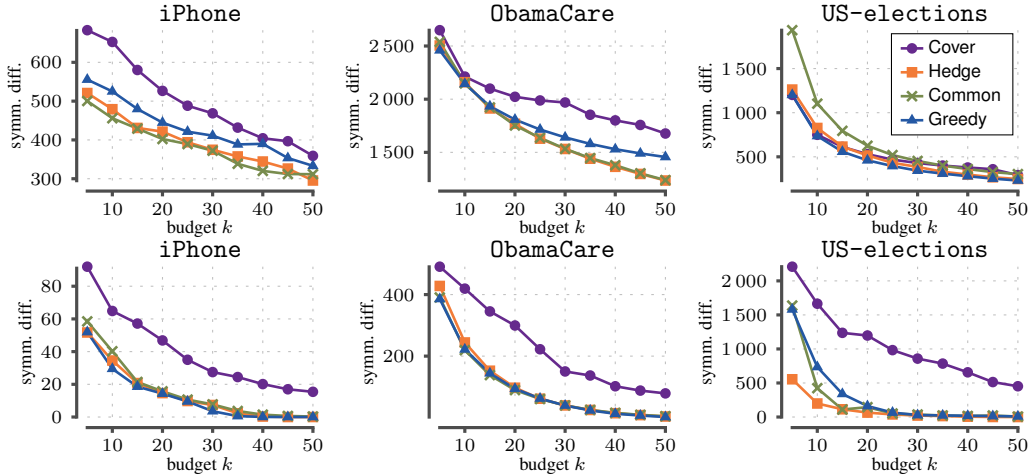

\spara{Baselines.}
We use 5 different baselines.
The first baseline, \borodinalg, is an adaptation of the framework by~\citet{borodin2017strategyproof}. 
This framework requires an objective function as an input, and here we use our objective function $\diff$.
The framework works as follows:
The two campaigns are given a budget $k / 2$ on the number of seeds that they can select.
At each round,
we select a vertex $v$ optimizing $\diff(S_1 \cup \set{v}, S_2)$, the vertex $v$
is added to $S_1$,
and then
we select a vertex $w$ optimizing $\diff(S_1, S_2 \cup \set{w})$, which is added to $S_2$.
We should stress that
the theoretical guarantees by~\citep{borodin2017strategyproof} do not apply because
our objective is not submodular. Thus, \borodinalg is a heuristic.



The next two heuristics add a set of common seeds to both campaigns.
We run a greedy algorithm for campaign $i = 1, 2$ to select the set $S'_i$ with the $\ell \gg k$ vertices $P_i$ that optimizes the function $\reach_i(S_i' \cup I_i)$.
We consider two heuristics:
\unionalg selects $S_1$ and $S_2$ to be equal to the $k/2$ first distinct vertices in $S_1' \cup S_2'$ while 
\interalg selects $S_1$ and $S_2$ to be equal to $k/2$ first vertices in $S_1' \cap S_2'$. 
Here the vertices are ordered based on their discovery time.
In both cases we (arbitrarily) break ties in favor of campaign~1.

Finally, {\highdegreealg} selects the vertices with the largest number of followers and assigns them alternately to the two cascades; and 
{\randomalg} assigns $k/2$ random seeds to each campaign.

In addition to the baselines, 
we also consider a simple greedy algorithm {\greedyalg}. 
The difference between {\coveralg} and {\greedyalg}
is that, in each iteration, 
{\coveralg} adds the seed that maximizes~$\diffcover$, 
while 
{\greedyalg} adds the seed that maximizes~$\diff$.
We can only show an approximation guarantee for {\coveralg}
but {\greedyalg} is a more intuitive approach
as it considers directly the number of balanced vertices
and we use it as a heuristic.

\spara{Comparison of the algorithms.}
We start by evaluating the quality of the sets of seeds computed by our algorithms, 
i.e., the number of equally-informed vertices.

\smallskip
\noindent
\emph{Heterogeneous setting.}
We consider first the case of heterogeneous networks. 
The results for the selected datasets are shown in Figure~\ref{fig:score}. Full results are shown in Appendix~\ref{sec:expextra}.
Instead of plotting $\diff$,
we plot the number of the remaining unbalanced vertices, $n - \diff$,
as it makes the results easier to distinguish; 
i.e., an optimal solution achieves the value 0.

The first observation is that the approximation algorithm \coveralg performs, 
in general, worse than the other two heuristics.
This is due to the fact that 
{\coveralg} does not optimize directly the objective function.
\hedgealg performs better than {\greedyalg}, in general, 
since it examines additional choices to select.
The only deviation from this picture is for the {\uselections} dataset, 
where the {\greedyalg} outperforms {\hedgealg} by a small factor.
This may due to the fact that 
while {\hedgealg} has more options, 
it allocates seeds in batches of two. 

The algorithms seem to follow a diminishing-returns behavior,
on most cases.
This behavior appears despite the fact that the optimization 
function is not submodular.

\smallskip
\noindent
\emph{Correlated setting.}
Next we consider correlated networks. 
We experiment with the three 
approximation algorithms \coveralg, \commonalg, \hedgealg, 
and the heuristic {\greedyalg}. 
The results are shown in Figure~\ref{fig:score}.
\coveralg performs again the worst 
since it is the only method that introduces new unbalanced vertices 
without caring about their cardinality.
Its variant, {\greedyalg},  performs much better in practice
even though it does not provide an approximation guarantee.
The algorithms \commonalg, {\greedyalg}, and \hedgealg perform very similar 
to each other without a clear winner.


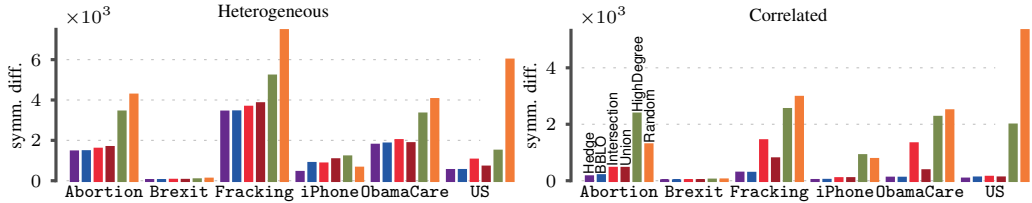
\begin{figure}
\begin{tikzpicture}
\begin{axis}[ylabel= {symm. diff.},
    width = 5.5cm,
    height = 2cm,
	title = {\scriptsize Heterogeneous},
    cycle list name=yaf,
    scale only axis,
    tick scale binop=\times,
    x tick label style = {/pgf/number format/set thousands separator = {\,}},
    y tick label style = {/pgf/number format/set thousands separator = {\,}},
    scaled y ticks = base 10:-3,
	xtick = \empty,
	every axis plot post/.append style= {line width=1pt},
	xmin = -0.5,
	ymin = 0,
	ybar,
	bar width = 2.5pt,
    no markers
    ]
\addplot[yafcolor1, fill] table[x expr = \coordindex, y index = 2, header = true] {heterogeneous_experimental_data_baseline.txt};
\addplot[yafcolor5, fill] table[x expr = \coordindex, y index = 8, header = true] {heterogeneous_experimental_data_baseline.txt};
\addplot[yafcolor4, fill] table[x expr = \coordindex, y index = 7, header = true] {heterogeneous_experimental_data_baseline.txt};
\addplot[yafcolor7, fill] table[x expr = \coordindex, y index = 9, header = true] {heterogeneous_experimental_data_baseline.txt};
\addplot[yafcolor3, fill] table[x expr = \coordindex, y index = 6, header = true] {heterogeneous_experimental_data_baseline.txt};
\addplot[yafcolor2, fill] table[x expr = \coordindex, y index = 5, header = true] {heterogeneous_experimental_data_baseline.txt};

\tikzstyle{labnode} = [inner sep = 1pt, font = \scriptsize, anchor = mid, yshift = -4pt] 

\node[labnode] at (axis cs: 0, 0) {\abortion};
\node[labnode] at (axis cs: 1, 0) {\brexit};
\node[labnode] at (axis cs: 2, 0) {\fracking};
\node[labnode] at (axis cs: 3, 0) {\iphone};
\node[labnode] at (axis cs: 4, 0) {\obamacare};
\node[labnode] at (axis cs: 5, 0) {\uselectionsalt};
\pgfplotsextra{\yafdrawyaxis{0}{7499}}
\end{axis}
\end{tikzpicture}%
\begin{tikzpicture}
\begin{axis}[ylabel= {symm. diff.},
    width = 5.5cm,
    height = 2cm,
	title = {\scriptsize Correlated},
    cycle list name=yaf,
    scale only axis,
    tick scale binop=\times,
    x tick label style = {/pgf/number format/set thousands separator = {\,}},
    y tick label style = {/pgf/number format/set thousands separator = {\,}},
    scaled y ticks = base 10:-3,
	xtick = \empty,
	every axis plot post/.append style= {line width=1pt},
	xmin = -0.5,
	ymin = 0,
	ybar,
	bar width = 2.5pt,
    no markers
    ]
\addplot[yafcolor1, fill] table[x expr = \coordindex, y index = 2, header = true] {homogeneous_experimental_data_baseline.txt};
\addplot[yafcolor5, fill] table[x expr = \coordindex, y index = 8, header = true] {homogeneous_experimental_data_baseline.txt};
\addplot[yafcolor4, fill] table[x expr = \coordindex, y index = 7, header = true] {homogeneous_experimental_data_baseline.txt};
\addplot[yafcolor7, fill] table[x expr = \coordindex, y index = 9, header = true] {homogeneous_experimental_data_baseline.txt};
\addplot[yafcolor3, fill] table[x expr = \coordindex, y index = 6, header = true] {homogeneous_experimental_data_baseline.txt};
\addplot[yafcolor2, fill] table[x expr = \coordindex, y index = 5, header = true] {homogeneous_experimental_data_baseline.txt};
\node[inner sep = 1pt, font = \tiny, anchor = west, rotate = 90, yshift = 11pt, scale = 0.6] at (axis cs: 0, 179) {\hedgealg};
\node[inner sep = 1pt, font = \tiny, anchor = west, rotate = 90, yshift = 7pt, scale = 0.6] at (axis cs: 0, 185) {\borodinalg};
\node[inner sep = 1pt, font = \tiny, anchor = west, rotate = 90, yshift = 2.5pt, scale = 0.6] at (axis cs: 0, 446) {\interalg};
\node[inner sep = 1pt, font = \tiny, anchor = west, rotate = 90, yshift = -2.5pt, scale = 0.6] at (axis cs: 0, 444) {\unionalg};
\node[inner sep = 1pt, font = \tiny, anchor = west, rotate = 90, yshift = -7.5pt, scale = 0.6] at (axis cs: 0, 2368) {\highdegreealg};
\node[inner sep = 1pt, font = \tiny, anchor = west, rotate = 90, yshift = -11.5pt, scale = 0.6] at (axis cs: 0, 1279) {\randomalg};

\tikzstyle{labnode} = [inner sep = 1pt, font = \scriptsize, anchor = mid, yshift = -4pt] 

\node[labnode] at (axis cs: 0, 0) {\abortion};
\node[labnode] at (axis cs: 1, 0) {\brexit};
\node[labnode] at (axis cs: 2, 0) {\fracking};
\node[labnode] at (axis cs: 3, 0) {\iphone};
\node[labnode] at (axis cs: 4, 0) {\obamacare};
\node[labnode] at (axis cs: 5, 0) {\uselectionsalt};

\pgfplotsextra{\yafdrawyaxis{0}{5325}}
\end{axis}
\end{tikzpicture}
\caption{Expected symmetric difference $n - \diff$ of \hedgealg and the baselines. $k = 20$. Low values are better.}
\label{fig:baseline}
\vspace{-.15cm}
\end{figure}

\spara{Comparison with baselines.}
Our next step is to compare against the baselines.
For simplicity, we focus on $k = 20$; the overall conclucions hold for other budgets.
The results for \hedgealg versus the five baselines are shown in Figure~\ref{fig:baseline}.

From the results we see that \borodinalg is the best competitor: its scores
are the closest to \hedgealg, and it receives slightly better scores in 3 out
of 12 cases. The competitiveness is not surprising because we specifically set 
the objective function in \borodinalg to be $\diff(S_1, S_2)$.
The \interalg and \unionalg also perform well but are always worse than \hedgealg.
{\randomalg} is unpredictable but always worse than {\hedgealg}.
In the case of heterogeneous networks, {\hedgealg} selects seeds that 
leave less unbalanced vertices, 
by a factor of two on average, compared to the seeds selected by the 
{\highdegreealg} method.
For correlated networks, our method outperforms the two baselines by an order of magnitude.

\spara{Running time.}
We proceed to evaluate the efficiency and the scalability of our algorithms. 
The running times, in seconds, of our algorithms, for all datasets and for $k=20$, 
are shown in Figure~\ref{fig:running_time} in Appendix~\ref{sec:expextra}
as a function of network size. 
We observe that all algorithms have comparable running times
and good scalability.

\spara{Use case with \fracking.}
We present a qualitative case-study analysis for the seeds selected by our algorithm.
We highlight the {\fracking} dataset, even though we applied similar analysis
to the other datasets as well (the results are given in Figure~\ref{fig:clouds} in Appendix~\ref{sec:expextra}).
Recall that for each dataset we identify two sides with opposing views, 
and a set of initial seeds for each side ($I_1$ and $I_2$).
We consider the users in the initial seeds $I_1$ (side supporting fracking),
and summarize the text of all their Twitter profile descriptions in a word cloud. 
The result, as can be seen in Figure~\ref{fig:clouds} in Appendix~\ref{sec:expextra},
contains words that are used to emphasize 
the benefits of fracking (energy, oil, gas, etc.).
We then draw a similar word cloud
for the users identified by the {\hedgealg} algorithm
as seed nodes in the sets $S_1$ and $S_2$ ($k=50$).
The result, shown in Figure~\ref{fig:clouds} in Appendix~\ref{sec:expextra},
contains a more balanced set of words, 
which includes many words used to underline the environmental 
dangers of fracking. 


\section{Conclusion}
\label{sec:conclusion}
We presented the first study of the problem of 
balancing information exposure in social networks
using techniques 
from the area of information diffusion.
Our approach has several novel aspects.
In particular, we formulate our problem by seeking to optimize a
{\em symmetric difference} function, 
which is neither monotone nor submodular, 
and thus, not amenable to existing approaches.
Additionally, while previous studies consider 
a setting with selfish agents
and provide bounds on best-response strategies
(i.e., move of the last player), 
we consider a centralized setting and
provide bounds for a global objective function.

Our work provides several directions for future work. 
One interesting problem is to 
improve the approximation guarantee 
for the problem we define.
Second, we would like to extend the problem definition 
for more than two campaigns and
design approximation algorithms for that case.

\subsubsection*{Acknowledgments}
Work partially done while Nikos Parotsidis was visiting Aalto University.
This work has been supported by the Academy of Finland project ``Nestor'' (286211) and the EC H2020 RIA project ``SoBigData'' (654024).

\bibliographystyle{abbrvnat}
\bibliography{burst-brief}

\newpage

\appendix


\section{Additional tables and figures related to the experimental evaluation}

\label{sec:expextra}

\begin{table}[h]
\caption{Dataset descriptions, as well as tags and rewteets that were used to collect the data.}
\label{tab:description}
\begin{tabular}{p{6.5cm}p{6.5cm}}
\toprule
\multicolumn{2}{l}{
\begin{minipage}{13.4cm}
\textbf{USelections}: Tweets containing hashtags and keywords identifying the USElections, such as 
\#uselections, \#trump2016, \#hillary2016, etc. Collected using Twitter 1\% sample for 2 weeks in September 2016
\end{minipage}
}\\[5mm]
\emph{Pro-Hillary} & \emph{Pro-Trump} \\
RT @hillaryclinton, \#hillary2016, \#clintonkaine2016, \#imwithher &
RT @realdonaldtrump, \#makeamericagreatagain, \#trumppence16, \#trump2016 \\

\midrule

\multicolumn{2}{l}{
\begin{minipage}{13cm}
\textbf{Brexit}: Tweets containing hashtags \#brexit, \#voteremain, \#voteleave, \#eureferendum for all of June 2016, from the 1\% Twitter sample.
\end{minipage}
}\\[5mm]

\emph{Pro-Remain} & \emph{Pro-Leave} \\
\#voteremain, \#strongerin, \#remain, \#remaineu, \#votein &
\#voteleave, \#strongerout, \#leaveeu, \#takecontrol, \#leave, \#voteout \\

\midrule

\multicolumn{2}{l}{
\begin{minipage}{13cm}
\textbf{Abortion}: Tweets containing hashtags \#abortion, \#prolife, \#prochoice, \#anti-abortion, \#pro-abortion, \#plannedparenthood from Oct 2011 to Aug 2016.
\end{minipage}
}\\[5mm]
\emph{Pro-Choice} & \emph{Pro-Life}\\
RT @thinkprogress, RT @komenforthecure, RT @mentalabortions, \#waronwomen, \#nbprochoice, \#prochoice, \#standwithpp, \#reprorights &
RT @stevenertelt, RT @lifenewshq, \#praytoendabortion, \#prolifeyouth, \#prolife, \#defundplannedparenthood, \#defundpp, \#unbornlivesmatter \\

\midrule

\multicolumn{2}{l}{
\begin{minipage}{13cm}
\textbf{Obamacare}: Tweets containing hashtags \#obamacare, and \#aca from Oct 2011 to Aug 2016.
\end{minipage}
}\\[5mm]
\emph{Pro-Obamacare} & \emph{Anti-Obamacare} \\
RT @barackobama, RT @lolgop, RT @charlespgarcia, RT @defendobamacare, RT @thinkprogress, \#obamacares, \#enoughalready, \#uniteblue &
RT @sentedcruz, RT @realdonaldtrump, RT @mittromney, RT @breitbartnews, RT @tedcruz, \#defundobamacare, \#makedclisten, \#fullrepeal, \#dontfundit \\

\midrule

\multicolumn{2}{l}{
\begin{minipage}{13cm}
\textbf{Fracking}: Tweets containing hashtags and keywords \#fracking, 'hydraulic fracturing', 'shale', 'horizontal drilling', from Oct 2011 to Aug 2016.
\end{minipage}
}\\[5mm]
\emph{Pro-Fracking} & \emph{Anti-Fracking} \\
RT @shalemarkets, RT @energyindepth, RT @shalefacts, \#fracknation, \#frackingez, \#oilandgas, \#greatgasgala, \#shalegas &
RT @greenpeaceuk, RT @greenpeace, RT @ecowatch, \#environment, \#banfracking, \#keepitintheground, \#dontfrack, \#globalfrackdown, \#stopthefrackattack \\

\midrule

\multicolumn{2}{l}{
\begin{minipage}{13cm}
\textbf{iPhone vs. Samsung}: Tweets containing hashtags \#iphone, and \#samsung from April (release of Samsung Galaxy S7), and September 2015 (release of iPhone 7).
\end{minipage}
}\\[5mm]
\emph{Pro-iPhone} & \emph{Pro-Samsung} \\
\#iphone &
\#samsung \\
\bottomrule
\end{tabular}
\end{table}

\begin{figure}[h]
\setlength{\imgwidthhet}{2.9cm}
\begin{tikzpicture}
\begin{axis}[xlabel={budget $k$},ylabel= {symm. diff.},
    width = \imgwidthhet,
    height = 2cm,
	title = {\abortion},
    cycle list name=yaf,
    scale only axis,
    x tick label style = {/pgf/number format/set thousands separator = {\,}},
    y tick label style = {/pgf/number format/set thousands separator = {\,}},
    scaled ticks = false,
    xtick = {10, 20, 30, 40, 50},
	every axis plot post/.append style= {line width=1pt},
    ]
\addplot table[x index = 0, y index = 1, header = true] {heterogeneous_experimental_data_abortion.txt};
\addplot table[x index = 0, y index = 2, header = true] {heterogeneous_experimental_data_abortion.txt};
\addplot table[x index = 0, y index = 4, header = true] {heterogeneous_experimental_data_abortion.txt};
\addplot table[x index = 0, y index = 3, header = true] {heterogeneous_experimental_data_abortion.txt};
\pgfplotsextra{\yafdrawaxis{5}{50}{956}{2702}}
\end{axis}
\end{tikzpicture}%
\begin{tikzpicture}
\begin{axis}[xlabel={budget $k$},ylabel= {symm. diff.},
    width = \imgwidthhet,
    height = 2cm,
	title = {\brexit},
    cycle list name=yaf,
    scale only axis,
    x tick label style = {/pgf/number format/set thousands separator = {\,}},
    y tick label style = {/pgf/number format/set thousands separator = {\,}},
    scaled ticks = false,
    xtick = {10, 20, 30, 40, 50},
	every axis plot post/.append style= {line width=1pt},
    ]
\addplot table[x index = 0, y index = 1, header = true] {heterogeneous_experimental_data_brexit.txt};
\addplot table[x index = 0, y index = 2, header = true] {heterogeneous_experimental_data_brexit.txt};
\addplot table[x index = 0, y index = 4, header = true] {heterogeneous_experimental_data_brexit.txt};
\addplot table[x index = 0, y index = 3, header = true] {heterogeneous_experimental_data_brexit.txt};
\pgfplotsextra{\yafdrawaxis{5}{50}{7}{54}}
\end{axis}
\end{tikzpicture}%
\begin{tikzpicture}
\begin{axis}[xlabel={budget $k$},ylabel= {symm. diff.},
    width = \imgwidthhet,
    height = 2cm,
	title = {\fracking},
    cycle list name=yaf,
    scale only axis,
    x tick label style = {/pgf/number format/set thousands separator = {\,}},
    y tick label style = {/pgf/number format/set thousands separator = {\,}},
    scaled ticks = false,
    xtick = {10, 20, 30, 40, 50},
	every axis plot post/.append style= {line width=1pt},
    ]
\addplot table[x index = 0, y index = 1, header = true] {heterogeneous_experimental_data_fracking.txt};
\addplot table[x index = 0, y index = 2, header = true] {heterogeneous_experimental_data_fracking.txt};
\addplot table[x index = 0, y index = 4, header = true] {heterogeneous_experimental_data_fracking.txt};
\addplot table[x index = 0, y index = 3, header = true] {heterogeneous_experimental_data_fracking.txt};
\pgfplotsextra{\yafdrawaxis{5}{50}{2397}{5452}}
\end{axis}
\end{tikzpicture}

\begin{tikzpicture}
\begin{axis}[xlabel={budget $k$},ylabel= {symm. diff.},
    width = \imgwidthhet,
    height = 2cm,
	title = {\iphone},
    cycle list name=yaf,
    scale only axis,
    x tick label style = {/pgf/number format/set thousands separator = {\,}},
    y tick label style = {/pgf/number format/set thousands separator = {\,}},
    scaled ticks = false,
    xtick = {10, 20, 30, 40, 50},
	every axis plot post/.append style= {line width=1pt},
    ]
\addplot table[x index = 0, y index = 1, header = true] {heterogeneous_experimental_data_iphone_samsung.txt};
\addplot table[x index = 0, y index = 2, header = true] {heterogeneous_experimental_data_iphone_samsung.txt};
\addplot table[x index = 0, y index = 4, header = true] {heterogeneous_experimental_data_iphone_samsung.txt};
\addplot table[x index = 0, y index = 3, header = true] {heterogeneous_experimental_data_iphone_samsung.txt};
\pgfplotsextra{\yafdrawaxis{5}{50}{295}{682}}
\end{axis}
\end{tikzpicture}%
\begin{tikzpicture}
\begin{axis}[xlabel={budget $k$},ylabel= {symm. diff.},
    width = \imgwidthhet,
    height = 2cm,
	title = {\obamacare},
    cycle list name=yaf,
    scale only axis,
    x tick label style = {/pgf/number format/set thousands separator = {\,}},
    y tick label style = {/pgf/number format/set thousands separator = {\,}},
    scaled ticks = false,
    xtick = {10, 20, 30, 40, 50},
	every axis plot post/.append style= {line width=1pt},
    ]
\addplot table[x index = 0, y index = 1, header = true] {heterogeneous_experimental_data_obamacare.txt};
\addplot table[x index = 0, y index = 2, header = true] {heterogeneous_experimental_data_obamacare.txt};
\addplot table[x index = 0, y index = 4, header = true] {heterogeneous_experimental_data_obamacare.txt};
\addplot table[x index = 0, y index = 3, header = true] {heterogeneous_experimental_data_obamacare.txt};
\pgfplotsextra{\yafdrawaxis{5}{50}{1235}{2648}}
\end{axis}
\end{tikzpicture}%
\begin{tikzpicture}
\begin{axis}[xlabel={budget $k$},ylabel= {symm. diff.},
    width = \imgwidthhet,
    height = 2cm,
	every axis plot post/.append style= {line width=1pt},
    cycle list name=yaf,
    scale only axis,
	title = {\uselections},
    x tick label style = {/pgf/number format/set thousands separator = {\,}},
    y tick label style = {/pgf/number format/set thousands separator = {\,}},
    scaled ticks = false,
    xtick = {10, 20, 30, 40, 50},
    legend style={nodes={scale=0.75, transform shape}},
	legend entries = {\coveralg, \hedgealg, \commonalg, \greedyalg}
    ]
\addplot table[x index = 0, y index = 1, header = true] {heterogeneous_experimental_data_uselections.txt};
\addplot table[x index = 0, y index = 2, header = true] {heterogeneous_experimental_data_uselections.txt};
\addplot table[x index = 0, y index = 4, header = true] {heterogeneous_experimental_data_uselections.txt};
\addplot table[x index = 0, y index = 3, header = true] {heterogeneous_experimental_data_uselections.txt};
\pgfplotsextra{\yafdrawaxis{5}{50}{231}{1934}}
\end{axis}
\end{tikzpicture}
\caption{Expected symmetric difference $n - \diffind$ as a function of the budget $k$. Heterogeneous model. Low values are better.}
\label{fig:heterogeneous_score}
\end{figure}
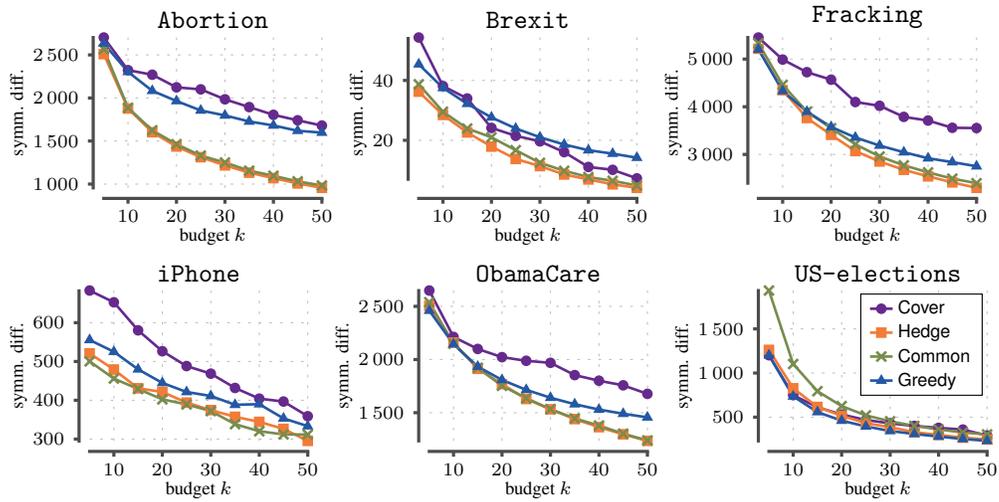
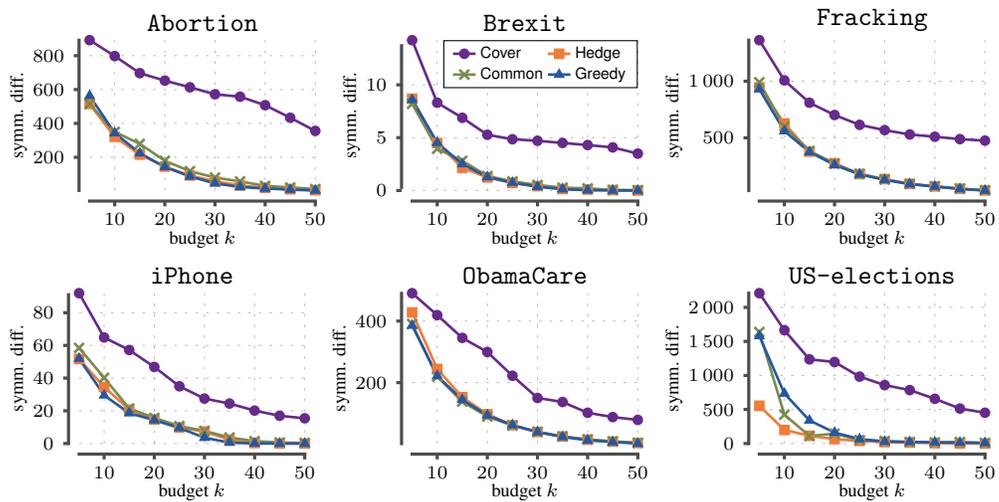
\begin{figure}[h]
\setlength{\imgwidth}{3cm}
\begin{tikzpicture}
\begin{axis}[xlabel={budget $k$},ylabel= {symm. diff.},
    width = \imgwidth,
    height = 2cm,
	title = {\abortion},
    cycle list name=yaf,
    scale only axis,
    x tick label style = {/pgf/number format/set thousands separator = {\,}},
    y tick label style = {/pgf/number format/set thousands separator = {\,}},
    scaled ticks = false,
    xtick = {10, 20, 30, 40, 50},
	every axis plot post/.append style= {line width=1pt},
    ]
\addplot table[x index = 0, y index = 1, header = true] {homogeneous_experimental_data_abortion.txt};
\addplot table[x index = 0, y index = 2, header = true] {homogeneous_experimental_data_abortion.txt};
\addplot table[x index = 0, y index = 4, header = true] {homogeneous_experimental_data_abortion.txt};
\addplot table[x index = 0, y index = 3, header = true] {homogeneous_experimental_data_abortion.txt};
\pgfplotsextra{\yafdrawaxis{5}{50}{4}{891}}
\end{axis}
\end{tikzpicture}%
\begin{tikzpicture}
\begin{axis}[xlabel={budget $k$},ylabel= {symm. diff.},
    width = \imgwidth,
    height = 2cm,
	title = {\brexit},
    cycle list name=yaf,
    scale only axis,
    x tick label style = {/pgf/number format/set thousands separator = {\,}},
    y tick label style = {/pgf/number format/set thousands separator = {\,}},
    scaled ticks = false,
    xtick = {10, 20, 30, 40, 50},
	every axis plot post/.append style= {line width=1pt},
	legend style={nodes={scale=0.65, transform shape}},
	legend entries = {\coveralg, \hedgealg, \commonalg, \greedyalg},
	legend columns = 2,
	every axis legend/.append style={at={(1,1)}, anchor=north east}
    ]
\addplot table[x index = 0, y index = 1, header = true] {homogeneous_experimental_data_brexit.txt};
\addplot table[x index = 0, y index = 2, header = true] {homogeneous_experimental_data_brexit.txt};
\addplot table[x index = 0, y index = 4, header = true] {homogeneous_experimental_data_brexit.txt};
\addplot table[x index = 0, y index = 3, header = true] {homogeneous_experimental_data_brexit.txt};
\pgfplotsextra{\yafdrawaxis{5}{50}{14}{0}}
\end{axis}
\end{tikzpicture}%
\begin{tikzpicture}
\begin{axis}[xlabel={budget $k$},ylabel= {symm. diff.},
    width = \imgwidth,
    height = 2cm,
	title = {\fracking},
    cycle list name=yaf,
    scale only axis,
    x tick label style = {/pgf/number format/set thousands separator = {\,}},
    y tick label style = {/pgf/number format/set thousands separator = {\,}},
    scaled ticks = false,
    xtick = {10, 20, 30, 40, 50},
	every axis plot post/.append style= {line width=1pt},
    ]
\addplot table[x index = 0, y index = 1, header = true] {homogeneous_experimental_data_fracking.txt};
\addplot table[x index = 0, y index = 2, header = true] {homogeneous_experimental_data_fracking.txt};
\addplot table[x index = 0, y index = 4, header = true] {homogeneous_experimental_data_fracking.txt};
\addplot table[x index = 0, y index = 3, header = true] {homogeneous_experimental_data_fracking.txt};
\pgfplotsextra{\yafdrawaxis{5}{50}{34}{1362}}
\end{axis}
\end{tikzpicture}
\begin{tikzpicture}
\begin{axis}[xlabel={budget $k$},ylabel= {symm. diff.},
    width = \imgwidth,
    height = 2cm,
	title = {\iphone},
    cycle list name=yaf,
    scale only axis,
    x tick label style = {/pgf/number format/set thousands separator = {\,}},
    y tick label style = {/pgf/number format/set thousands separator = {\,}},
    scaled ticks = false,
    xtick = {10, 20, 30, 40, 50},
	every axis plot post/.append style= {line width=1pt},
    ]
\addplot table[x index = 0, y index = 1, header = true] {homogeneous_experimental_data_iphone_samsung.txt};
\addplot table[x index = 0, y index = 2, header = true] {homogeneous_experimental_data_iphone_samsung.txt};
\addplot table[x index = 0, y index = 4, header = true] {homogeneous_experimental_data_iphone_samsung.txt};
\addplot table[x index = 0, y index = 3, header = true] {homogeneous_experimental_data_iphone_samsung.txt};
\pgfplotsextra{\yafdrawaxis{5}{50}{0}{91}}
\end{axis}
\end{tikzpicture}%
\begin{tikzpicture}
\begin{axis}[xlabel={budget $k$},ylabel= {symm. diff.},
    width = \imgwidth,
    height = 2cm,
	title = {\obamacare},
    cycle list name=yaf,
    scale only axis,
    x tick label style = {/pgf/number format/set thousands separator = {\,}},
    y tick label style = {/pgf/number format/set thousands separator = {\,}},
    scaled ticks = false,
    xtick = {10, 20, 30, 40, 50},
	every axis plot post/.append style= {line width=1pt},
    ]
\addplot table[x index = 0, y index = 1, header = true] {homogeneous_experimental_data_obamacare.txt};
\addplot table[x index = 0, y index = 2, header = true] {homogeneous_experimental_data_obamacare.txt};
\addplot table[x index = 0, y index = 4, header = true] {homogeneous_experimental_data_obamacare.txt};
\addplot table[x index = 0, y index = 3, header = true] {homogeneous_experimental_data_obamacare.txt};
\pgfplotsextra{\yafdrawaxis{5}{50}{2}{490}}
\end{axis}
\end{tikzpicture}%
\begin{tikzpicture}
\begin{axis}[xlabel={budget $k$},ylabel= {symm. diff.},
    width = \imgwidth,
    height = 2cm,
	every axis plot post/.append style= {line width=1pt},
    cycle list name=yaf,
    scale only axis,
	title = {\uselections},
    x tick label style = {/pgf/number format/set thousands separator = {\,}},
    y tick label style = {/pgf/number format/set thousands separator = {\,}},
    scaled ticks = false,
    xtick = {10, 20, 30, 40, 50},
    ]
\addplot table[x index = 0, y index = 1, header = true] {homogeneous_experimental_data_uselections.txt};
\addplot table[x index = 0, y index = 2, header = true] {homogeneous_experimental_data_uselections.txt};
\addplot table[x index = 0, y index = 4, header = true] {homogeneous_experimental_data_uselections.txt};
\addplot table[x index = 0, y index = 3, header = true] {homogeneous_experimental_data_uselections.txt};
\pgfplotsextra{\yafdrawaxis{5}{50}{0}{2208}}
\end{axis}
\end{tikzpicture}
\caption{Expected symmetric difference $n - \difftc$ as a function of the budget $k$. Correlated model. Low values are better.}
\label{fig:homogeneous_score}
\end{figure}
\begin{figure}
\pgfkeys{
  /pgf/number format/sci generic={mantissa sep={\times},exponent={10^{#1}}},
  }
\begin{center}
\begin{tikzpicture}
\begin{axis}[ylabel= {running time (s)}, xlabel = {number of edges, $\abs{E}$},
    width = 10cm,
    height = 5cm,
    cycle list name=yaf,
    scale only axis,
    tick scale binop=\times,
    x tick label style = {/pgf/number format/set thousands separator = {\,}},
    y tick label style = {/pgf/number format/set thousands separator = {\,}},
    xtick = {0, 500000, 1000000, 1500000},
	scaled ticks = false,
	xmin = 0,
	every axis plot post/.append style= {line width=1pt},
	bar width = 2.5pt,
	legend pos = {south east},
	legend entries = {\coveralg, \hedgealg, \commonalg}
    ]

\addplot+[only marks] table[x index = 9, y index = 5, header = true] {time_experimental_data_comparison.txt};
\addplot+[only marks] table[x index = 9, y index = 6, header = true] {time_experimental_data_comparison.txt};
\addplot+[only marks] table[x index = 9, y index = 8, header = true] {time_experimental_data_comparison.txt};

\addplot[no markers, yafcolor1] table[x index = 9, y = {create col/linear regression={x = edges, y = Hom-Cover}}, header = true] {time_experimental_data_comparison.txt};
\addplot[no markers, yafcolor2] table[x index = 9, y = {create col/linear regression={x = edges, y = Hom-Hedge}}, header = true] {time_experimental_data_comparison.txt};
\addplot[no markers, yafcolor3] table[x index = 9, y = {create col/linear regression={x = edges, y = Hom-Common}}, header = true] {time_experimental_data_comparison.txt};

\pgfplotsextra{\yafdrawaxis{0}{1511670}{0}{804}}
\end{axis}
\end{tikzpicture}%
\end{center}
\caption{Running time as a function of number of edges. Correlated model with $k = 20$.}
\label{fig:running_time}
\end{figure}

\begin{figure}
\setlength{\tabcolsep}{1pt}
\begin{tabular}{lll}
Side 1 & Side 2 & \hedgealg \\
\emph{Pro-Choice} & \emph{Pro-Life}\\
\includegraphics[width = 4.5cm]{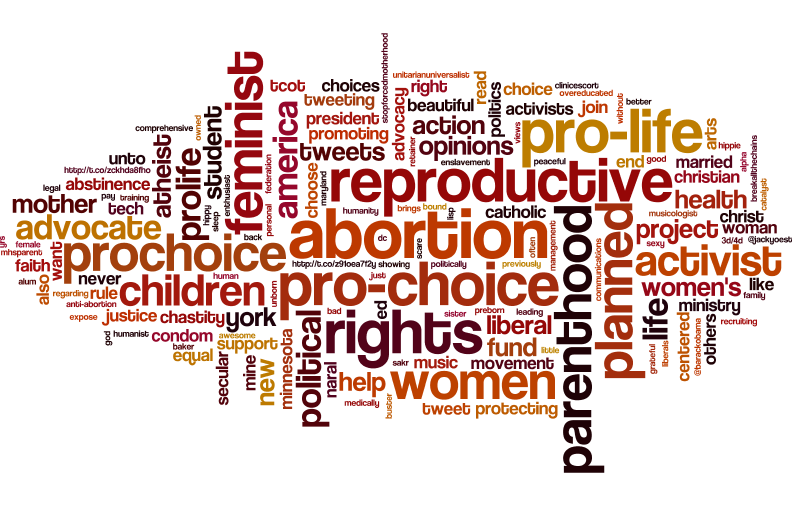} &
\includegraphics[width = 4.5cm]{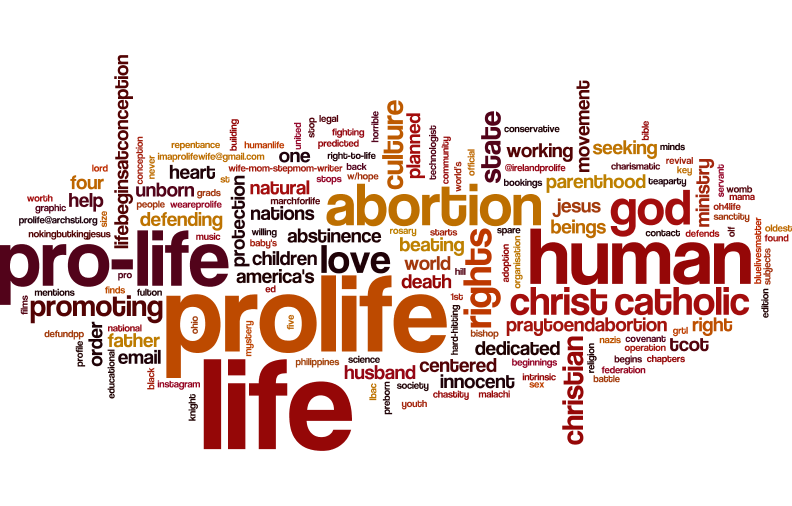} &
\includegraphics[width = 4.5cm]{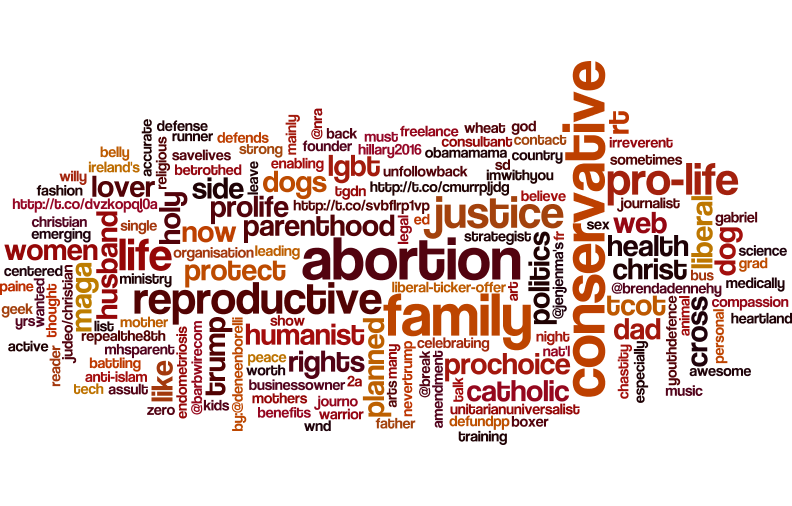} \\

\emph{Pro-Remain} & \emph{Pro-Leave} \\

\includegraphics[width = 4.5cm]{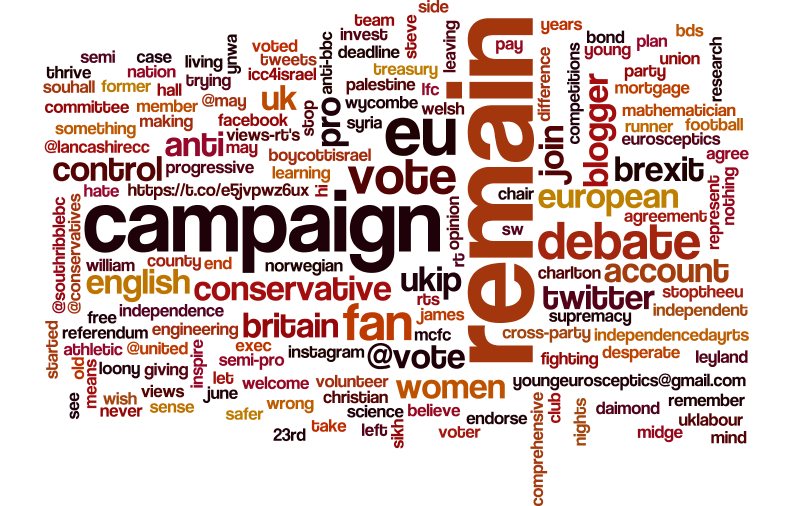} &
\includegraphics[width = 4.5cm]{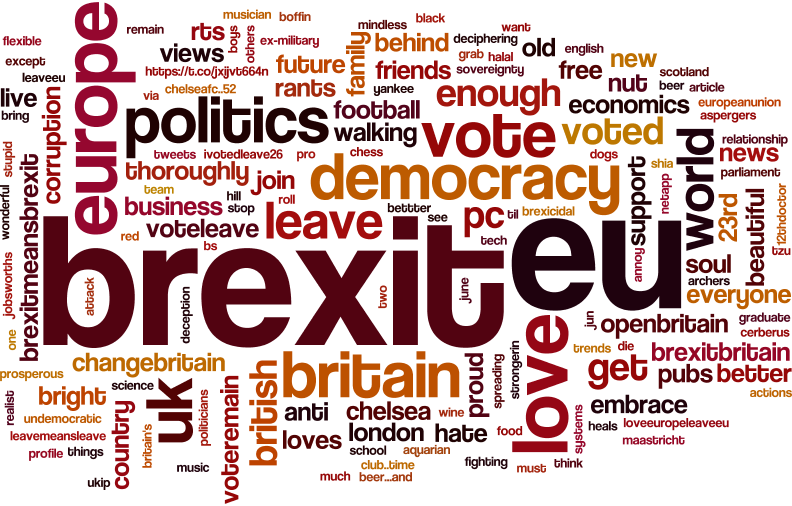} &
\includegraphics[width = 4.5cm]{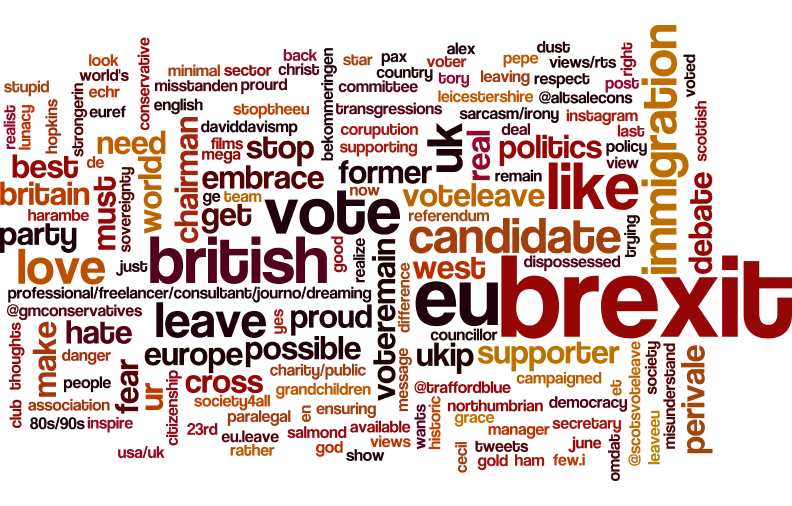} \\

\emph{Pro-Fracking} & \emph{Anti-Fracking} \\

\includegraphics[width = 4.5cm]{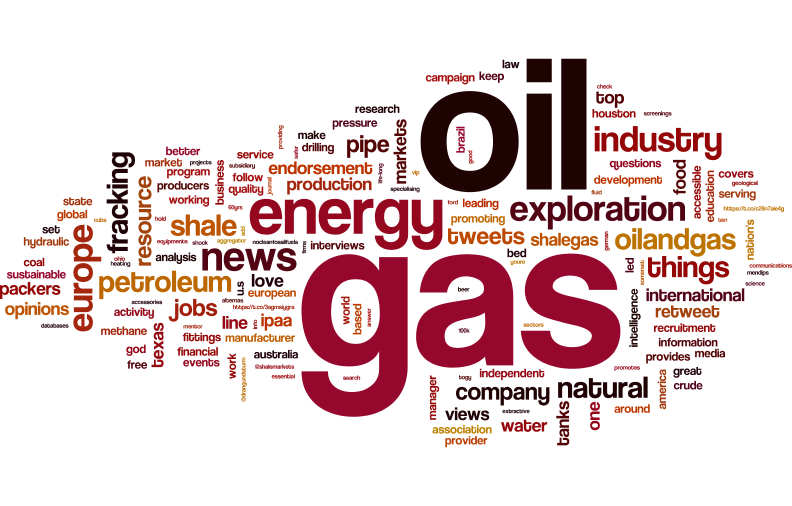} &
\includegraphics[width = 4.5cm]{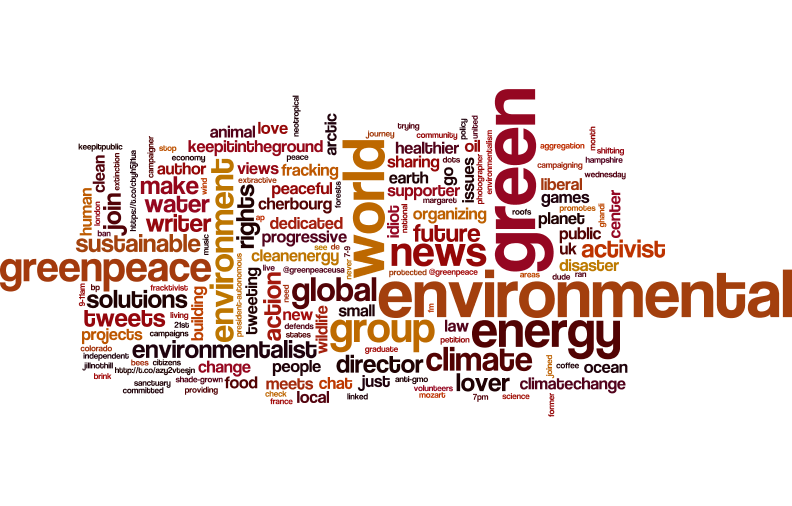} &
\includegraphics[width = 4.5cm]{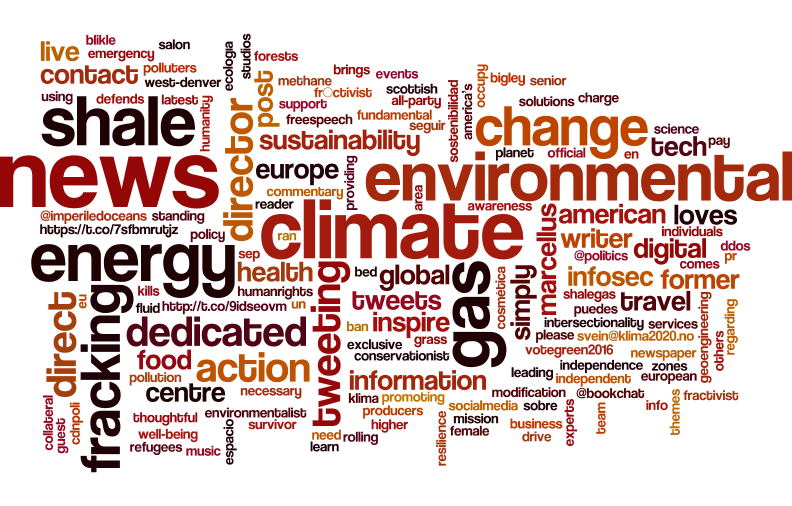} \\

\emph{Pro-iPhone} & \emph{Pro-Samsung} \\

\includegraphics[width = 4.5cm]{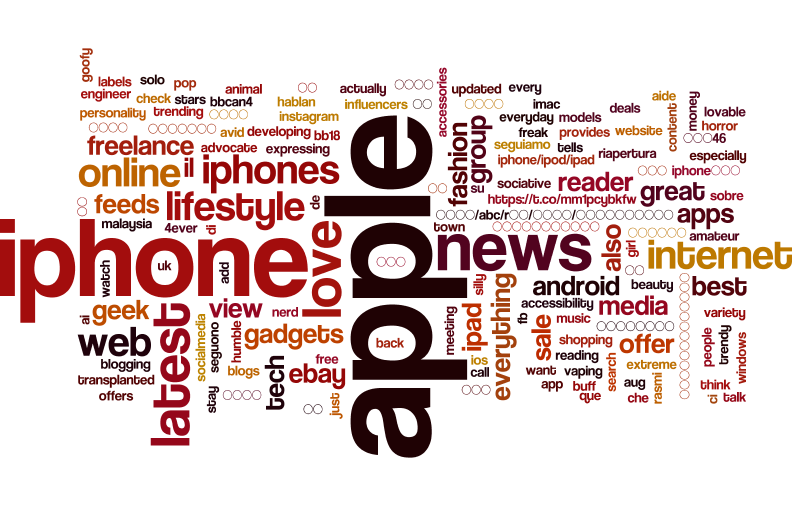} &
\includegraphics[width = 4.5cm]{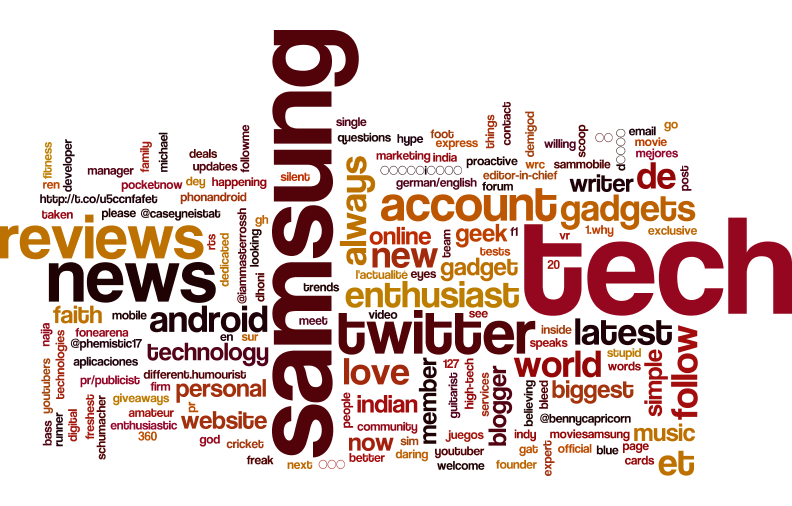} &
\includegraphics[width = 4.5cm]{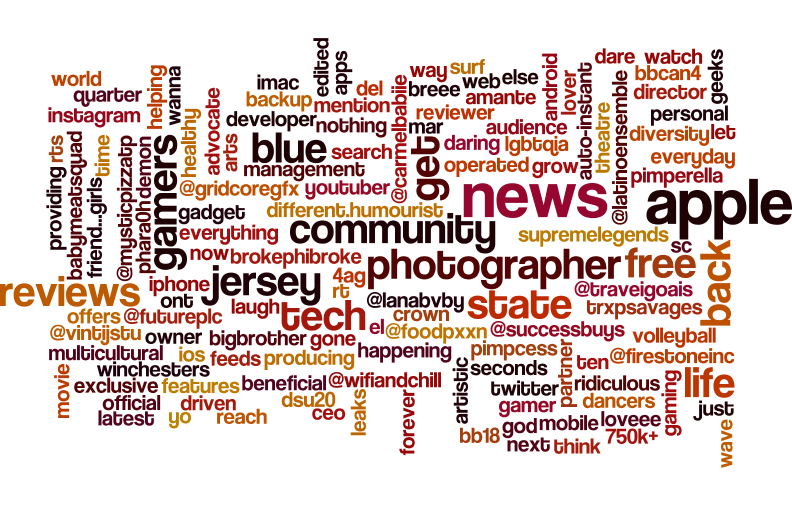} \\

\emph{Pro-Obamacare} & \emph{Anti-Obamacare} \\

\includegraphics[width = 4.5cm]{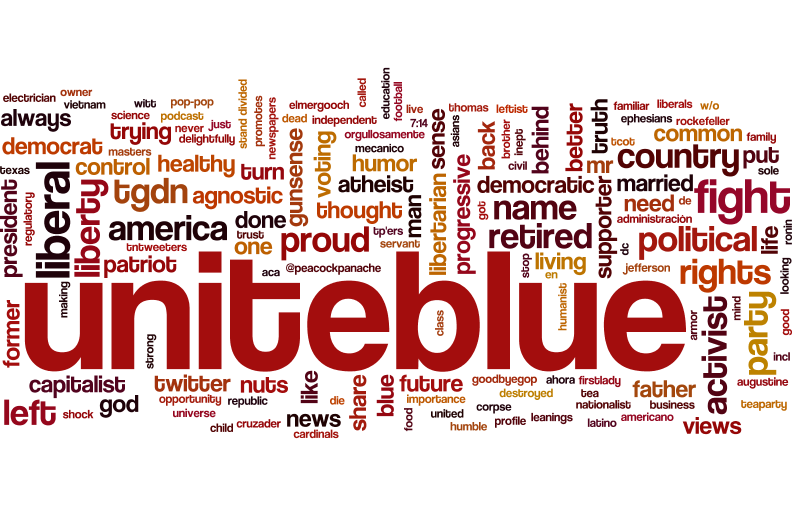} &
\includegraphics[width = 4.5cm]{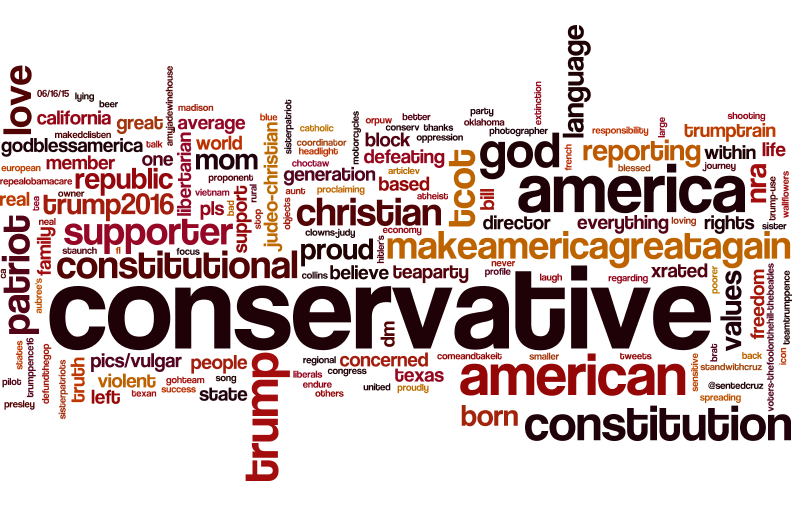} &
\includegraphics[width = 4.5cm]{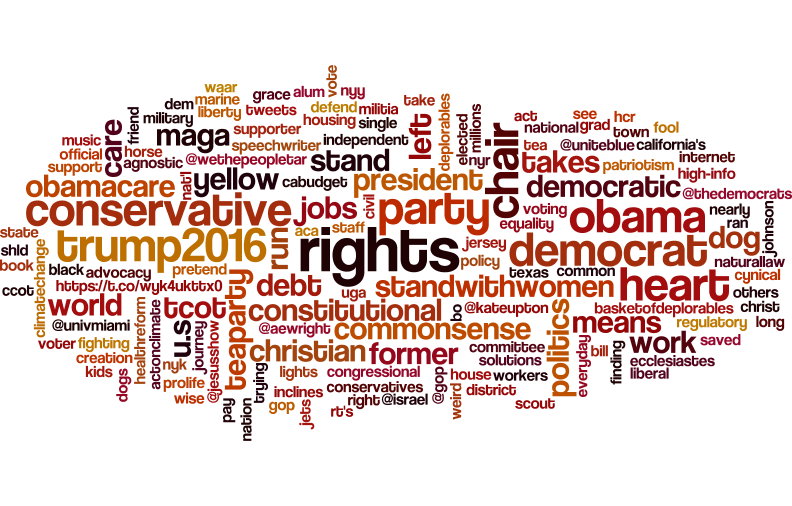} \\

\emph{Pro-Hillary} & \emph{Pro-Trump} \\

\includegraphics[width = 4.5cm]{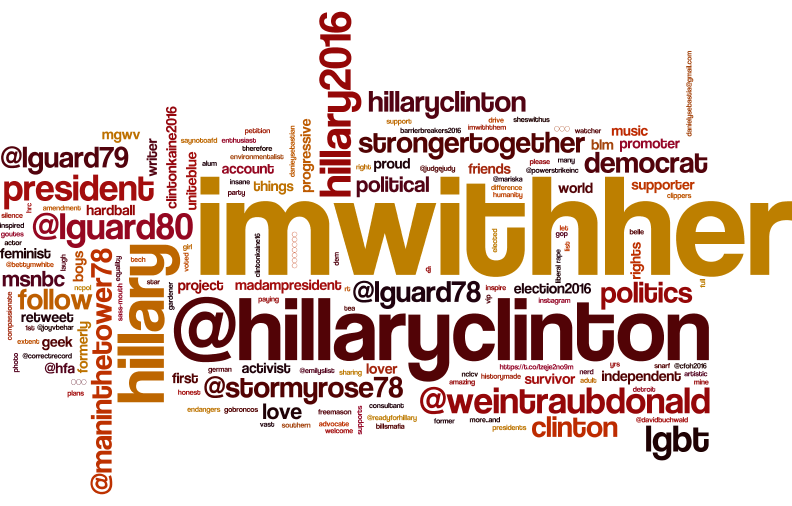} &
\includegraphics[width = 4.5cm]{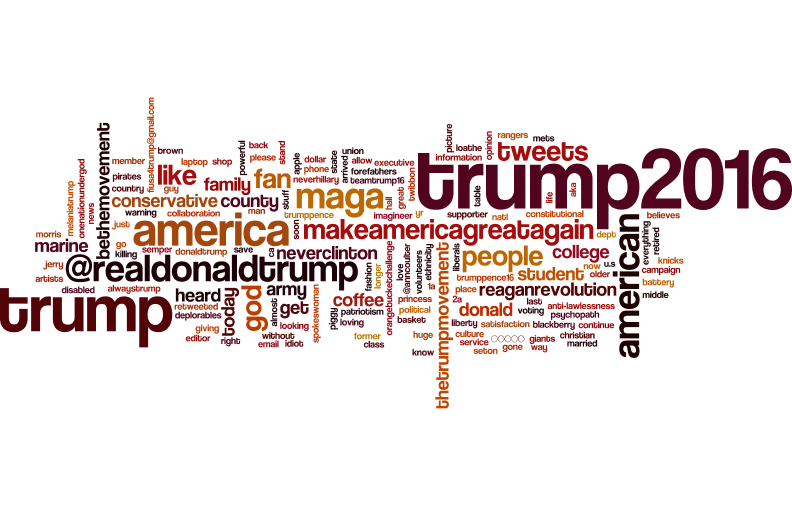} &
\includegraphics[width = 4.5cm]{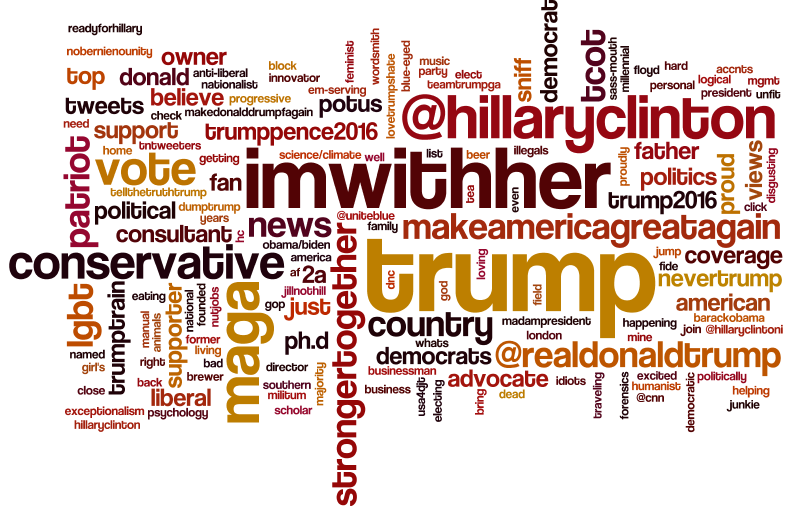} \\
\end{tabular}
\caption{Word clouds of the profiles for the initial seeds, and profiles selected by \hedgealg.}
\label{fig:clouds}
\end{figure}

\end{document}